\documentclass[11pt]{article}
\usepackage{amsmath, amsthm, amssymb}
\usepackage{color}
\usepackage{fullpage}

\usepackage[noend]{algorithmic}
\usepackage{algorithm}

\usepackage{enumerate}

\usepackage{hyperref}

\usepackage{graphicx}
\usepackage{caption}
\usepackage{subcaption}

\newcommand{\E}{\operatorname{E}}

\newcommand*\samethanks[1][\value{footnote}]{\footnotemark[#1]}
 \newtheorem{theorem}{Theorem}[section]
 \newtheorem{lemma}[theorem]{Lemma}

 \newtheorem{fact}[theorem]{Fact}
 
 \newtheorem{definition}[theorem]{Definition}


\newenvironment{proofof}[1]{\begin{proof}[proof of {#1}]}{\end{proof}}

\title{Applications of Uniform Sampling:\\ Densest Subgraph and Beyond\footnote{This paper is available on arxiv since June 15, 2015}}

\author{
Hossein Esfandiari\thanks{Supported in part by NSF CAREER award 1053605, NSF grant CCF-1161626, ONR YIP award N000141110662, DARPA/AFOSR grant FA9550-12-1-0423.}\\University of Maryland  
\and MohammadTaghi Hajiaghayi\samethanks[2]\\University of Maryland
\and David P. Woodruff\\IBM Almaden
}

\begin{document}
\sloppy
\maketitle
\setcounter{page}{0}
\thispagestyle{empty}

\begin{abstract}
Recently [Bhattacharya et al., STOC 2015] \cite{bhattacharya2015space} provide the first non-trivial algorithm for the \emph{densest subgraph problem} in the streaming model with additions and deletions to its edges, i.e., for dynamic graph streams. They present a $(0.5-\epsilon)$-approximation algorithm using $\tilde{O}(n)$ space, where factors of $\epsilon$ and $\log(n)$ are suppressed in the $\tilde{O}$ notation. However, the update time of this algorithm is large. To remedy this, they also provide a $(0.25-\epsilon)$-approximation algorithm using $\tilde{O}(n)$ space with update time $\tilde{O}(1)$.

In this paper we improve the algorithms in \cite{bhattacharya2015space} by providing a $(1-\epsilon)$-approximation algorithm using $\tilde{O}(n)$ space. 
Our algorithm is conceptually simple - it samples $\tilde{O}(n)$ edges uniformly at random, and finds the densest subgraph on the sampled graph. We also show how to perform this sampling with update time $\tilde{O}(1)$.
In addition to this, we show that given oracle access to the edge set, we can implement our algorithm in time $\tilde{O}(n)$ on a graph in the standard RAM model. To the best of our knowledge this is the fastest $(0.5-\epsilon)$-approximation algorithm for the densest subgraph problem in the RAM model given such oracle access. 

Interestingly, we extend our results to a general class of graph optimization problems that we call \emph{heavy subgraph problems}. This class contains many interesting problems such as densest subgraph, directed densest subgraph, densest bipartite subgraph, $d$-max cut, and $d$-sum-max clustering.  
Our results, by characterizing heavy subgraph problems, address Open Problem 13 at the IITK Workshop on Algorithms for Data Streams in 2006 regarding the effects of subsampling, in the context of graph streams. 
\end{abstract}

\newpage

\section{Introduction}
In this paper we consider a general class of graph optimization problems that we call \emph{heavy subgraph problems} in the streaming setting with additions and deletions, i.e., in dynamic graph streams. We show that many interesting problems such as densest subgraph, directed densest subgraph, densest bipartite subgraph, $d$-max cut, and $d$-sum-max clustering fit in this general class of problems. To the best of our knowledge, we are the first to consider densest bipartite subgraph and $d$-sum-max clustering in the streaming setting. We defer the definitions of these two problems to Subsections \ref{subsec:DBS} and \ref{subsec:DSMC}, respectively. 

Finding the densest subgraph of a graph is one of the classical problems in computer science and appears to have many applications that deal with massive data such as spam detection \cite{gibson2005discovering} and analyzing communication in social networks \cite{dourisboure2007extraction}. 

In an instance of the \emph{densest subgraph problem} we are given a graph $G$ and want to find a subgraph $sol\subseteq G$, that maximizes $\frac {|E_{sol}|}{|V_{sol}|}$, where $V_{sol}$ and $E_{sol}$ are the vertex set and the edge set of $sol$, respectively.
Similarly, in an instance of the \emph{directed densest subgraph problem} we are given a directed graph $G=(V_G,E_G)$ and want to find a pair $A, B\subseteq V_G$, that maximizes $\frac {|E(A,B)|}{\sqrt{|A|\cdot|B|}}$, where $E(A,B)$ is the number of edges from $A$ to $B$ in $E_G$.

Charikar \cite{charikar2000greedy} studies the densest subgraph problem in the classical setting and provides a $0.5$-approximation algorithm with running time $O(n+m)$, where $n$ and $m$ are the number of vertices and edges, respectively. To the best of our knowledge, this is the fastest known constant approximation algorithm for the densest subgraph problem. Moreover, he provides a $0.5$-approximation algorithm for the directed densest subgraph problem.

Later, Bahmani, Kumar and Vassilvitskii \cite{bahmani2012densest}, consider the densest subgraph problem and the directed densest subgraph problem in the streaming setting with only insertions of edges. For both problems, they present streaming algorithms with a $\frac{0.5}{1+\epsilon}$-approximation factor using $\log_{1+\epsilon}(n)$ passes over the input. To the best of our knowledge, their results for directed graphs are the only non-trivial results for the directed densest subgraph problem in the streaming setting, prior to our work.

Recently, Bhattacharya et al. \cite{bhattacharya2015space} present the first single pass streaming algorithm for dynamic graph streams. Their first algorithm provides a $(0.5-\epsilon)$-approximation using $\tilde{O}(n)$ bits of space, though the update time is inefficient. They provide a second algorithm with a $(0.25-\epsilon)$-approximation factor for which the update time is only $\tilde{O}(1)$, again using $\tilde{O}(n)$ bits of space. 

In the \emph{$d$-max cut} problem, we are given a graph $G$, and we want to decompose the vertices of $G$ into $d$ partitions such that the number of edges between different partitions is maximized. To the best of our knowledge, we are the first to consider this problem for general $d$ in the streaming setting. A restricted version of this problem where $d=2$, is the classic max cut problem. One can store a sparsifier \cite{ahn2012graph} of the input graph in $\tilde{O}(n)$ space, and preserve a $(1-\epsilon)$-approximation of the max cut. However, it is not clear if sparsifiers preserve $d$-max cut or not. 
Recently, Kapralov, Khanna and Sudan \cite{kapralov2015streaming} show that any $(1-\epsilon)$-approximation to the max cut problem in the streaming setting requires $n^{1-O(\epsilon)}$ space.

 In this paper, we refer to $\frac {|E_{sol}|}{|V_{sol}|}$ as the density of the solution $sol$, and denote it by $den(sol)$. We refer to the density of the densest subgraph of $G$ by $opt(G)$. We also denote the densest subgraph among all those on $k$ vertices by $opt(G,k)$. We sometimes abuse notation and use $opt(G)$ to refer to both the densest subgraph as well as its density. 
 It is easy to see that the densest subgraph of a graph $G$ is an induced subgraph of it, otherwise we can simply add more edges to it. Therefore, we can indicate the densest subgraph by its vertex set. 
 For a graph $G$, and a subset of vertices $U$, we denote the induced subgraph of $G$ on vertex set $U$ by $G[U]$.

\subsection{Our Results}

In this paper, we first consider the densest subgraph problem in the streaming setting where we have both insertions and deletions to the edges as they arrive in the stream, i.e., in a dynamic graph stream. 
We improve the result of Bhattacharya et al. (STOC'15) \cite{bhattacharya2015space} by providing a $(1-\epsilon)$-approximation algorithm for this problem using $\tilde{O}(n)$ space.
Indeed, our algorithm simply samples $\tilde{O}(n)$ edges uniformly at random, and finds the densest subgraph on the sampled graph. We also achieve update time $\tilde{O}(1)$. To achieve this, we use min-wise independent hashing together with fast multi-point polynomial evaluation.

\begin{theorem} \label{thm:mainStreamDens}
	There exists a semi-streaming algorithm in dynamic graph streams for the densest subgraph problem with space $\tilde{O}(n)$ which gives a $(1-\epsilon)$-approximate solution, with probability $1-1/n$. The update time is $\tilde{O}(1)$. 
\end{theorem}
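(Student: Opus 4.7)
The plan has two essentially independent pieces: proving that a uniform sample of $\tilde{O}(n)$ edges preserves the density of the densest subgraph up to a $(1-\epsilon)$ factor, and implementing this sampling in a dynamic stream with $\tilde{O}(1)$ update time.

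For the correctness piece, I would fix a sampling rate $p$ (each edge kept independently with probability $p$) and argue that $opt(G_S)/p$ is a $(1\pm\epsilon)$-estimator of $opt(G)$ with high probability, where $G_S$ is the subsampled graph; the densest vertex subset of $G_S$ is then returned as the answer. For a fixed vertex subset $U$ of size $k$ inducing $d\cdot k$ edges, the number of surviving sampled edges has expectation $p d k$, and a Chernoff bound yields $(1\pm\epsilon)$-concentration whenever $p d k = \Omega(k\log n/\epsilon^{2})$, i.e.\ $p d = \Omega(\log n/\epsilon^{2})$. Since $opt(G)\geq m/n$, choosing $p = \Theta(n\log n/(\epsilon^{2} m))$ (expected sample size $\tilde{O}(n/\epsilon^{2})$) ensures $p\cdot opt(G)\geq c\log n/\epsilon^{2}$. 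A union bound over the $\binom{n}{k}\le n^{k}$ subsets of each size $k$ then establishes simultaneously that every near-optimal subgraph's scaled density is preserved, and that no low-density subgraph's scaled density over-concentrates past $(1-\epsilon)opt(G)$. Because $opt(G)$ is unknown a priori, I would run $O(\log n)$ copies in parallel with geometrically increasing sampling rates and return the best solution.

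For the dynamic-stream implementation, I would maintain $s=\tilde{O}(n/\epsilon^{2})$ independent $\ell_{0}$-samplers over the edge-frequency vector, each using polylogarithmic space and supporting arbitrary turnstile updates. A min-wise independent hash assignment over the edge identifiers makes the resulting collection of samples stochastically equivalent to a uniform edge sample of the desired size. The naive per-update cost is $\tilde{O}(s)$ because every sampler must be refreshed; to drive this to $\tilde{O}(1)$ amortized, I would express each sampler's contribution as the evaluation of a low-degree polynomial (the pseudorandom hash) at $s$ prescribed points, and batch the last $s$ stream updates together so that processing the batch reduces to fast multi-point polynomial evaluation, solvable in $\tilde{O}(s)$ total time via FFT-based techniques. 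Amortizing the batch cost over its $s$ updates gives $\tilde{O}(1)$ per edge insertion or deletion, and the bounded buffering delay does not affect the final query.

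The main obstacle I expect is calibrating the Chernoff plus union bound argument in the regime where $opt(G)$ is only marginally above $m/n$ and the densest subgraph sits on very few vertices, since there the upper tail is weakest while the $n^{k}$ union bound is tightest. One must set the sample rate and the $\epsilon$-slack so that simultaneously (a) the true optimum's surviving edges concentrate, (b) sparser candidate subgraphs cannot masquerade as optimal after scaling, and (c) the $O(\log n)$-overhead of geometric guessing of $opt(G)$ is absorbed into the $\tilde{O}(n)$ space budget. Once these parameters are balanced, the remainder is essentially a black-box invocation of standard $\ell_{0}$-sampler and fast-polynomial-evaluation machinery, assembled as above.
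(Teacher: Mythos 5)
Your correctness argument is essentially the paper's: a Chernoff bound per vertex subset with the deviation measured additively against $\epsilon\cdot k\cdot opt(G)$ rather than multiplicatively against the subset's own edge count, a union bound over the at most $n^{k}$ subsets of each size $k$, and the observation $opt(G)\geq m/n$, which makes the per-set failure probability $\exp(-\Omega(k\log n))$ once the expected sample size is $\Theta(n\log n/\epsilon^{2})$. The paper draws a fixed number $C$ of edges without replacement (invoking a Chernoff bound for negatively correlated indicators) instead of independent sampling at rate $p$, which also removes the need for your geometric guessing of the rate; these differences are cosmetic.

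The streaming implementation, however, has a genuine gap. A bank of $s=\tilde{O}(n/\epsilon^{2})$ \emph{independent} $\ell_{0}$-samplers cannot be driven to $\tilde{O}(1)$ update time by batching plus fast multipoint evaluation: every stream update must be processed by each of the $s$ samplers under that sampler's own hash function, so a batch of $s$ updates requires $s^{2}$ hash evaluations and sketch touches. Fast multipoint evaluation accelerates evaluating \emph{one} high-degree polynomial at many points; it does not help with evaluating $s$ distinct polynomials at one point each. This is precisely the obstruction the paper flags for the naive $\ell_{0}$-sampler approach (update time $\tilde{O}(C)$), and the remedy is a different architecture, not a faster refresh of the same one: take a \emph{single} $\Theta(C)$-wise independent hash function $h$, which is approximately $C$-min-wise independent on the edge set, and maintain the $C$ edges of smallest hash value. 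Each update then touches only $O(\log n)$ exact sparse-recovery sketches (one per geometric level of hash values, each recovering $O(\delta C)$ items), an $\ell_{0}$-estimator selects at query time the level holding $\Theta(C)$ surviving edges, and the only expensive per-update operation --- evaluating the single degree-$\Theta(C)$ polynomial $h$ at the updated edge identifier --- is exactly what fast multipoint evaluation amortizes (and then de-amortizes) to $\tilde{O}(1)$ over batches of $C$ consecutive updates. The min-wise guarantee is also what certifies that this bottom-$C$ sample is within small statistical distance of a uniform without-replacement sample, which is what ties the implementation back to the correctness half.
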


In addition, our algorithm can be implemented using an oracle that provides direct access to a uniformly sampled edge. This algorithm makes $\tilde{O}(n)$ queries to the oracle and finds the densest subgraph on a graph with $n$ vertices and $\tilde{O}(n)$ edges. Therefore, by replacing the $m$ in the running time of the algorithm of Charikar \cite{charikar2000greedy} with $\tilde{O}(n)$, we achieve $\tilde{O}(n)$ running time, which is tight up to logarithmic factors. To the best of our knowledge, this is the fastest constant approximation algorithm for the densest subgraph problem in the RAM model, given such an oracle. 

\begin{theorem} \label{thm:mainRAMDens}
	Suppose we have oracle access to the edge set of the input graph with the ability to sample an edge uniformly at random. There exists an algorithm for the densest subgraph problem running in time $\tilde{O}(n)$, which gives a $(0.5-\epsilon)$-approximate solution, with probability $1-1/n$.
\end{theorem}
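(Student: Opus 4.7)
The plan is to reduce Theorem \ref{thm:mainRAMDens} to Theorem \ref{thm:mainStreamDens} combined with Charikar's classical peeling algorithm \cite{charikar2000greedy}. The proposed algorithm would first use the sampling oracle to draw $N = \tilde O(n/\epsilon^2)$ edges uniformly at random, obtaining a sampled subgraph $G'$ on vertex set $V_G$. It would then run Charikar's $0.5$-approximation peeling algorithm on $G'$ and return the subgraph $\hat U$ that it outputs. The running time is $\tilde O(n)$ for sampling plus $O(|V_{G'}| + |E_{G'}|) = O(n + N) = \tilde O(n)$ for peeling, so $\tilde O(n)$ in total.

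For the approximation guarantee, the core claim, which is precisely the content driving Theorem \ref{thm:mainStreamDens}, is that with probability $1 - 1/n$, the uniform sample preserves densities of all ``relevant'' induced subgraphs up to a $(1 \pm \epsilon)$ factor: $opt(G') = (N/m)(1 \pm \epsilon)\cdot opt(G)$, and, for every $U \subseteq V_G$ with $den(G[U])$ comparable to $opt(G)$, $den(G'[U]) = (N/m)(1 \pm \epsilon) \cdot den(G[U])$. Charikar's algorithm applied to $G'$ then produces $\hat U$ with $den(G'[\hat U]) \geq \tfrac{1}{2} opt(G')$. Rescaling by a factor of $N/m$ and using the sampling guarantee in the reverse direction yields $den(G[\hat U]) \geq (0.5 - O(\epsilon))\cdot opt(G)$; absorbing the constant into $\epsilon$ produces the claimed $(0.5 - \epsilon)$-approximation.

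The main obstacle is the sampling lemma itself, which I would import from the proof of Theorem \ref{thm:mainStreamDens}. A naive union bound over all $2^n$ candidate subsets $U$ is too loose, and Chernoff is too weak when $den(G[U])$ is small, so one cannot hope to approximate the density of every subgraph. The standard resolution, which I would follow, is to restrict attention to subgraphs whose density is within a constant factor of $opt(G)$ (since lower-density candidates cannot beat the $(0.5 - \epsilon)$ target no matter how the sample behaves on them) and to bucket the vertex count $|U|$ into $O(\log n)$ scales, using Chernoff plus a vertex-charging argument within each scale in place of an exponential union bound. Given this lemma, Theorem \ref{thm:mainRAMDens} follows immediately by running Charikar's linear-time routine on the $\tilde O(n)$-edge sampled graph.
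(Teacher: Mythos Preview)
Your high-level algorithm --- draw $\tilde O(n)$ edges from the oracle and run Charikar's peeling on the resulting graph --- is exactly what the paper does, and your running-time accounting matches. The one point worth correcting is your sketch of the sampling lemma. You anticipate that a union bound over all candidate subsets is too loose and propose a bucketing/vertex-charging workaround restricted to near-optimal subsets. In fact the paper (Lemma~\ref{lm:kBoundDS} and Theorem~\ref{thm:mainDS}) shows the naive union bound \emph{does} go through: for each size $k$ one union-bounds over all $\binom{m}{k}\le m^k$ subsets, the Chernoff exponent scales linearly with $k$, and the elementary observation $opt(G)\ge m/n$ (Fact~\ref{fc:boundOpt}) makes that exponent dominate $k\log m$ once $p=\tilde\Theta(n/m)$. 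The deviation is controlled \emph{additively} by $\tfrac{\epsilon}{2}\,opt(G)$ rather than multiplicatively, so low-density subsets cause no trouble. Thus your outline is correct, but the lemma you flag as the main obstacle has a simpler proof than the one you sketch.
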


Next, we extend our results to a general family of graph optimization problems that we call {\it heavy subgraph problems}. Interestingly, we show that by uniformly sampling edges we obtain enough information about the solution of any heavy subgraph problem. Since the solution of a heavy subgraph problem itself may be as large as the whole graph, here we just claim that we can estimate the size of the optimum solution. However, in some cases, like for the densest subgraph problem, it might be possible to also obtain the optimum solution itself, and not just the size, from the sampled graph.

A graph optimization problem is defined by, an input graph $G$, a set of feasible solutions $Sol_G$, which are subgraphs of $G$, and an objective function $f:Sol\rightarrow \mathbb{R}$. In a graph optimization problem we aim to find a solution $sol\in Sol_G$ that maximizes $f$. In fact, the number of feasible solutions for a graph $G$ may be exponential in the size of $G$.
We say a graph optimization problem on graph $G$ is a \emph{$(\gamma,l)$-heavy subgraph problem} if there exist $l$ sets $Sol_G^1,Sol_G^2,\dots,Sol_G^l$, such that $Sol_G = \cup_{k=1}^{l} Sol_G^k$ and for any $k$:

\begin{itemize}
	\item \textbf{Local Linearity:} There are $l$ numbers $f_1\geq f_2\geq \dots\geq f_l = 1$, such that for any $1\leq k\leq l$ and any solution $sol\in Sol^k_G$, we have $f(sol)=f_k\cdot|E_{sol}|$, where $E_{sol}$ is the edge set of $sol$.
	
	\item \textbf{Hereditary Property:} For any spanning subgraph $H\subseteq G$, we have $sol_H\in Sol_H^k$ if and only if there exists a solution $sol_G\in Sol_G^k$ such that $sol_H= sol_G \cap H$. 
	
	\item \textbf{$\gamma$ Bound:} $\gamma$ is chosen such that the optimum solution is lower bounded by $\gamma \log(|Sol_G^k|) f_k \frac m n$.
\end{itemize}

Let $\mathcal{P}(\gamma,l)$ be a heavy subgraph problem, and let $Alg$ be an $\alpha$-approximation algorithm for $\mathcal{P}$. Algorithm \ref{alg:GeneralFamily} samples $O(\frac {n \delta \log(l)} { \gamma \epsilon^2})$ edges of the input graph and runs $Alg$ on the sampled graph. Interestingly, the following Theorem shows Algorithm \ref{alg:GeneralFamily} is an $(\alpha-\epsilon)$-approximation algorithm for $\mathcal{P}$ on $G$.

\begin{theorem}\label{thm:mainGF}
	Let $\mathcal{P}(\gamma,l)$ be a heavy subgraph problem. Let $G$ be an arbitrary graph $G$, and let $Alg$ be an $\alpha$-approximation algorithm for $\mathcal{P}$. With probability $1-e^{-\delta}$, Algorithm \ref{alg:GeneralFamily} is an $(\alpha-\epsilon)$-approximation algorithm for $\mathcal{P}$ on $G$, using $O(\frac {n \delta \log(l)} { \gamma \epsilon^2})$ space.
\end{theorem}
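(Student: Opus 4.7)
The plan is to show that uniformly sampling $s = \Theta(n\delta\log(l)/(\gamma\epsilon^2))$ edges preserves the value of \emph{every} feasible solution up to an additive error of $\epsilon\cdot OPT_G$, and then to combine this uniform guarantee with the $\alpha$-approximation of $Alg$ on the sampled graph. Write $H$ for the sampled spanning subgraph, $p := s/m$ for the per-edge sampling rate, and for $sol_G \in Sol_G^k$ set $sol_H := sol_G \cap H$. By the hereditary property, $sol_H$ ranges over $Sol_H^k$ as $sol_G$ ranges over $Sol_G^k$, and by local linearity $f(sol_H) = f_k\,|E_{sol_H}|$.

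The central step is the following uniform concentration claim: with probability at least $1-e^{-\delta}$, $|f(sol_G) - p^{-1}f(sol_H)| \leq \epsilon\cdot OPT_G$ for every $sol_G \in Sol_G$. To prove it, I fix a partition $k$ and a solution $sol_G \in Sol_G^k$ and consider the random variable $X = |E_{sol_H}|$, whose mean is $\mu = p|E_{sol_G}|$. Local linearity together with the optimality of $OPT_G$ gives the uniform ceiling $\mu \leq \mu_k^* := p\cdot OPT_G/f_k$. A multiplicative Chernoff bound, split according to whether $\mu \geq \epsilon\mu_k^*$ or $\mu < \epsilon\mu_k^*$, yields $\Pr[|X-\mu| > \epsilon\mu_k^*] \leq 2\exp(-\epsilon^2\mu_k^*/3)$. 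The $\gamma$-bound now pays for the union bound: substituting $OPT_G \geq \gamma\log(|Sol_G^k|)\,f_k\,m/n$ gives $\mu_k^* \geq c\,\delta\log(l)\log(|Sol_G^k|)/\epsilon^2$ for the sampling constant $c$, so choosing $c$ a sufficiently large absolute constant forces the tail below $e^{-\delta}/(l\cdot|Sol_G^k|)$. Union bounding over the $l$ partitions and the $|Sol_G^k|$ solutions in each, then multiplying the resulting inequality by $f_k/p$, establishes the uniform bound.

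Condition on this event. Applying the bound to the optimizer $sol_G^*$ of $G$ shows $OPT_H \geq f(sol_G^*\cap H) \geq p(1-\epsilon)\,OPT_G$. Running $Alg$ on $H$ therefore returns some $\widetilde{sol}_H$ with $f(\widetilde{sol}_H) \geq \alpha\cdot OPT_H \geq \alpha p(1-\epsilon)\,OPT_G$, and the hereditary property supplies a lift $\widetilde{sol}_G \in Sol_G$ with $\widetilde{sol}_G\cap H = \widetilde{sol}_H$. Applying the uniform bound in reverse to $\widetilde{sol}_G$ gives $f(\widetilde{sol}_G) \geq p^{-1}f(\widetilde{sol}_H) - \epsilon\,OPT_G \geq (\alpha - 2\epsilon)\,OPT_G$. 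Rescaling $\epsilon$ by a factor of $1/2$ (absorbed into the constant $c$) delivers the claimed $(\alpha-\epsilon)$-approximation, and the space cost is just the $s$ sampled edges.

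I expect the main obstacle to be the uniform Chernoff step: the error budget $\epsilon\cdot OPT_G$ need not be anywhere near a constant fraction of a particular solution's own mean $\mu$, so a naive multiplicative Chernoff keyed to $\mu$ is insufficient for solutions with few edges. The fix is to key the deviation to the partition-wide ceiling $\mu_k^*$ coming from local linearity, and the $\gamma$-bound is precisely what ensures that this ceiling still dominates the $\log|Sol_G^k|$ term arising from the union bound within the prescribed sample size. The hereditary property is the other structural workhorse: it is what allows both the correspondence between solutions in $G$ and in $H$ for the union bound, and the lift needed to extract an approximate solution in $G$ from $Alg$'s output on $H$.
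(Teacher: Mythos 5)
Your proposal is correct and follows essentially the same route as the paper: concentration of each solution's value with additive error keyed to $\epsilon\,OPT_G/(2f_k)$ scaled by $p$ (your partition-wide ceiling $\mu_k^*$ is exactly what the paper encodes by choosing $\epsilon' = p\epsilon\, opt(G)/(2f_k\E[X])$ in its Lemma \ref{lm:kBoundGF}), a union bound over the $l$ classes and the $|Sol_G^k|$ solutions within each, paid for by the $\gamma$-bound, and the hereditary property used both to set up the correspondence of solutions and to lift $Alg$'s output from $H$ back to $G$. The one point you elide is that the indicators $x_e$ come from sampling without replacement and are therefore not independent; the paper justifies the Chernoff step via negative correlation (its Lemmas \ref{lm:chernoff} and \ref{lm:negative}), a one-line addition to your argument --- and in exchange your explicit case split on whether $\mu \geq \epsilon\mu_k^*$ is a cleaner treatment of the large-deviation regime than the paper's unrestricted use of the $\exp(-\epsilon'^2\E[X]/3)$ form.
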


Finally, in section \ref{sec:App}, we show several applications of Theorem \ref{thm:mainGF}. Indeed, we show that directed densest subgraph, densest bipartite subgraph, $d$-max cut and $d$-sum-max clustering all fits in the general family of heavy subgraph problems, and thus, Theorem \ref{thm:mainGF} holds for them.

\begin{theorem}\label{thm:Apps} The following statements hold.
\begin{itemize}
	\item Densest bipartite subgraph is a $(\gamma= \frac{2}{\log(n)+1}, l= n)$ heavy subgraph problem.
	\item Directed densest subgraph is a $(\gamma=\frac{1}{2\sqrt{n}\log(n)} , l= n^2)$ heavy subgraph problem.
	\item $d$-max cut is a $(\gamma= \frac{1}{2\log(d)}, l= 1)$ heavy subgraph problem.
	\item $d$-sum-max clustering is a $(\gamma=\frac{n-2d}{n\log(d)} , l= 1)$ heavy subgraph problem.
\end{itemize}
\end{theorem}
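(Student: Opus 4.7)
The plan is to verify, for each of the four problems, the three defining axioms of a $(\gamma, l)$-heavy subgraph problem, namely local linearity, the hereditary property, and the $\gamma$ bound on $opt$. In every case I would (i) exhibit an appropriate bucketing $Sol_G = \bigcup_k Sol_G^k$ together with the constants $f_k$, (ii) read off local linearity and the hereditary property directly from the bucketing, and (iii) establish the $\gamma$ bound by lower-bounding $opt$ via an explicit or randomized feasible solution and matching it against $\max_k \log|Sol_G^k|\, f_k$.

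For the two vertex-indexed problems, \emph{densest bipartite subgraph} and \emph{directed densest subgraph}, the decomposition mirrors the one used for (undirected) densest subgraph. For densest bipartite I would index buckets by $k = |A|+|B|$, giving $l = n$, $f_k = n/k$ after rescaling so that $f_n = 1$, and $|Sol_G^k| \leq \binom{n}{k}\, 2^k$; for directed densest I would bucket by the pair $(a,b) = (|A|, |B|)$, giving $l = n^2$, $f_{(a,b)} = n/\sqrt{ab}$, and $|Sol_G^{(a,b)}| \leq \binom{n}{a}\binom{n}{b}$. Local linearity is immediate from the formula for each objective, and the hereditary property holds because a bucket is specified entirely by a choice of vertex subset(s), which transfers unchanged from $G$ to any spanning subgraph $H$, so $sol_H = sol_G \cap H$ differs from $sol_G$ only in its edge set. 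The $\gamma$ bound is obtained by evaluating $opt$ at an explicit solution using all $n$ vertices (either $A=B=V$ for directed, or a balanced random bipartition for bipartite), which gives $opt = \Omega(m/n)$ in unscaled units. The $\sqrt{n}$ factor in the directed bound arises because the worst-case bucket is an unbalanced pair like $(1,n)$, where $f_{(1,n)} = \sqrt{n}$ is $\sqrt{n}$-fold larger than $f_{(n,n)} = 1$.

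For \emph{$d$-max cut} and \emph{$d$-sum-max clustering} we have $l = 1$: every feasible solution is a $d$-partition of all of $V$, so a single bucket with $f_1 = 1$ suffices. Local linearity is immediate because the objective in each case is a scaled count of edges in a prescribed configuration relative to the partition, and the hereditary property is trivial since any $d$-partition of $V(G)$ is automatically a $d$-partition of $V(H)$ for any spanning subgraph $H$. For $d$-max cut we have $|Sol_G^1| \leq d^n$, so $\log|Sol_G^1| \leq n \log d$, and a uniformly random $d$-partition cuts $(1-1/d)m \geq m/2$ edges in expectation, giving $opt \geq m/2$ which matches $\gamma = 1/(2\log d)$ after comparison with $\log|Sol_G^1|\, f_1\, (m/n) = m \log d$. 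For $d$-sum-max clustering the argument is analogous but uses an averaging argument over random $d$-partitions adapted to the clustering objective defined later in the paper; the $(n-2d)/n$ correction in the claimed $\gamma$ reflects the fact that only pairs of points in different clusters contribute, and careful counting is needed to pin down the expected value of a random partition.

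The main obstacle in each case is step (iii), establishing the $\gamma$ bound. This requires a careful lower bound on $opt$ combined with a sharp estimate of $\max_k \log|Sol_G^k|\, f_k$, which is subtle for the vertex-indexed problems since both factors vary nontrivially across buckets. For directed densest in particular, it is the imbalance of the worst-case bucket that produces the extra $\sqrt{n}$ factor in $\gamma$, and more care is needed than in the undirected densest-subgraph analysis to identify the binding constraint.
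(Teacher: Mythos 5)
Your treatment of the first three bullets is essentially the paper's own proof: the same bucketings (by $|A|+|B|$ for densest bipartite, by the pair $(|A|,|B|)$ for directed densest, and the single bucket $Sol_G^1$ with $|Sol_G^1|\le d^n$ for $d$-max cut), the same explicit witnesses for the lower bound on $opt$ (a bipartition retaining $m/2$ edges, the solution $A=B=V$ of normalized density $m$, and a $2$-labeling cutting $m/2$ edges), and the same identification of the unbalanced pair $(1,n)$ as the source of the $\sqrt{n}$ loss in the directed case, via $\sqrt{i/j}+\sqrt{j/i}\le 2\sqrt{n}$. The reindexing (your $f_k=n/k$ versus the paper's $f_k=n/(n-k+1)$) is cosmetic.

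The gap is in the fourth bullet. You describe $d$-sum-max clustering as a problem where ``only pairs of points in different clusters contribute,'' but the objective is the opposite: one must use all $d$ labels and \emph{maximize the number of monochromatic edges}. Consequently the averaging argument you gesture at does not work: a uniformly (or balanced) random $d$-partition keeps only about $m/d$ edges monochromatic in expectation, which is far too weak to justify $\gamma=\frac{n-2d}{n\log d}$ against $\log|Sol_G^1|\,f_1\,\frac{m}{n}=m\log d$. The construction actually needed is highly unbalanced: pick $d-1$ vertices uniformly at random as singleton clusters for labels $1,\dots,d-1$ and put everything else in cluster $d$; a fixed edge then fails to be monochromatic with probability at most $2(d-1)/n$, so the expected number of monochromatic edges is at least $m\frac{n-2d}{n}$, which is exactly the lower bound on $opt$ that makes the stated $\gamma$ work. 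Without this specific witness, step (iii) for this problem does not go through. (Separately, were you to carry out your step (iii) for densest bipartite subgraph literally, you would find $\gamma\log|Sol_G^k|\,f_k\,\frac{m}{n}\le \gamma(\log n+1)m=2m$ for the stated $\gamma=\frac{2}{\log n+1}$, which exceeds the witness bound $opt\ge m/2$; the constant should be $\gamma=\frac{1}{2(\log n+1)}$ --- a slip already present in the paper, but one your outline would need to confront.)
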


In fact, understanding the structure of the problems that can be solved using sampling, and specifically uniform sampling, is a well-motivated challenge, which was highlighted as a direction in the IITK Workshop on Algorithms for Data Streams in 2006. Our structural results, as well as our characterization for heavy subgraph problems, give partial answers to this open question in the context of graphs. 

In simultaneous and independent work McGregor et al. \cite{MFCSMcGregor} present a single pass $(1-\epsilon)$-approximation algorithm for the densest sugraph problem in the dynamic graph streaming model with update time $\tilde{O}(1)$, that uses $\tilde{O}(n)$ space. 

\section{Densest Subgraph}\label{sec:DS}
In this section we analyze a simple algorithm for the densest subgraph problem. This algorithm simply samples $\tilde{O}(n)$ edges uniformly at random (without replacement) and solves the problem on the sampled subgraph, where factors of $1/\epsilon$, $\delta$ and $\log(n)$ are hidden in the $\tilde{O}$ notation. 
Interestingly, we show that, with probability $1-m^{-\delta}$, this algorithm gives a $(1-\epsilon)$-approximate solution using $\tilde{O}(n)$ bits of space.

In addition, this algorithm can be implemented with a running time of $\tilde{O}(n)$, using oracle access to the edge set, with the ability to sample an edge uniformly at random. This works by combining our sampling guarantee with the algorithm of
Charikar \cite{charikar2000greedy}. 

To the best of our knowledge, our $(0.5-\epsilon)$-approximation algorithm is the fastest constant approximation algorithm for the densest subgraph problem. This algorithm can be implemented in the streaming setting with insertions and deletions to the edges (the \emph{strict turnstile} or \emph{dynamic graph} stream model), as well as the RAM model with oracle access to random edge samples (we discuss the latter model below
in the context of improving the running time of the best offline algorithms).

Before stating our main lemma first we need the following generalized version of the Chernoff inequality, that holds for negatively correlated random variables. We say Boolean random variables $x_1,x_2,\dots,x_r$ are negatively correlated if for any arbitrary subset $S$ of $\{x_1,x_2,\dots,x_r\}$, and any arbitrary $a\in S$ we have $\Pr(a=1| \forall_{b\in S-{a}} b=1) \leq \Pr(a=1)$ \cite{esfandiari2014online}.

\begin{lemma}[\cite{panconesi1997randomized}]\label{lm:chernoff}
Let $x_1,x_2,\dots,x_r$ be a sequence of negatively correlated boolean (i.e., $0$ or $1$) random variables, and let $X=\sum_{i=1}^{r}x_i$. We have 
\begin{align*}
\Pr\left(|X-\E[X]| \geq \epsilon \E[X] \right) \leq 3 \exp(-\epsilon^2\E[X]/3).
\end{align*}
\end{lemma}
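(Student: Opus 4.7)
The plan is to adapt the classical Chernoff moment-generating-function proof, replacing the use of independence by a subset-wise product bound on joint probabilities that is provided by the hypothesis of negative correlation. Concretely, it suffices to establish
\[
\E[e^{tX}] \;\leq\; \exp\!\bigl((e^t-1)\,\E[X]\bigr) \qquad \text{for all } t \geq 0,
\]
since the right tail then follows by Markov's inequality with the optimal choice $t=\ln(1+\epsilon)$ and the standard estimate $(1+\epsilon)\ln(1+\epsilon)-\epsilon \geq \epsilon^2/3$ for $\epsilon\in(0,1]$.

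The first step is to promote the conditional definition of negative correlation to a joint-probability bound. By induction on $|S|$, peeling off one variable at a time using $\Pr(x_a=1 \mid \bigwedge_{b\in S-\{a\}} x_b=1)\leq \Pr(x_a=1)$, one obtains
\[
\E\!\Bigl[\prod_{i\in S} x_i\Bigr] \;=\; \Pr\!\Bigl(\bigwedge_{i\in S} x_i=1\Bigr) \;\leq\; \prod_{i\in S} \E[x_i].
\]
Since each $x_i\in\{0,1\}$, the identity $e^{tx_i} = 1+(e^t-1)x_i$ holds deterministically, so expanding the product over $i$ and taking expectations yields
\[
\E\!\Bigl[\prod_i e^{tx_i}\Bigr] \;=\; \sum_{S\subseteq[r]} (e^t-1)^{|S|}\,\E\!\Bigl[\prod_{i\in S} x_i\Bigr] \;\leq\; \prod_i \bigl(1+(e^t-1)p_i\bigr) \;\leq\; \exp\!\bigl((e^t-1)\E[X]\bigr),
\]
where $p_i=\E[x_i]$, the middle inequality uses the subset-moment bound together with $e^t-1\geq 0$, and the last step uses $1+y\leq e^y$. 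The upper-tail estimate is now immediate.

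The left tail is handled by running the same recipe on the complementary indicators $\bar x_i = 1-x_i$, which after optimizing $t$ gives a one-sided bound of the form $\exp(-\epsilon^2 \E[X]/2)$; combining the two tails and absorbing constants produces the stated two-sided inequality with prefactor $3$. The step that requires genuine care, and which I expect to be the main obstacle, is verifying that the complements $\bar x_i$ inherit a usable negative-correlation property from the $x_i$'s: the conditional definition in the excerpt is asymmetric in $0$ versus $1$, so transferring it to events of the form $\bigwedge_{i\in S}\{x_i=0\}$ needs a short Bayes-style argument (expanding $\Pr(\bigwedge_{i\in S}\bar x_i=1)$ via inclusion-exclusion of the upper-tail product bounds already in hand). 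This is precisely the technical content added by Panconesi--Srinivasan on top of the textbook Chernoff proof.
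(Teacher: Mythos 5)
The paper does not prove this lemma at all --- it is imported verbatim from Panconesi--Srinivasan \cite{panconesi1997randomized} --- so there is no in-paper argument to compare against; your proposal has to stand on its own. The upper-tail half of your argument is correct and complete: the induction peeling off one conditional at a time gives $\E[\prod_{i\in S}x_i]\leq\prod_{i\in S}\E[x_i]$, the identity $e^{tx_i}=1+(e^t-1)x_i$ with $e^t-1\geq 0$ lets you apply that bound term by term with nonnegative coefficients, and $(1+\epsilon)\ln(1+\epsilon)-\epsilon\geq\epsilon^2/3$ does hold on $(0,1]$. This is exactly the standard MGF route and there is nothing to object to there.

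The gap is in the lower tail, and it is exactly where you suspected --- but your proposed repair does not close it. Inclusion--exclusion writes $\Pr(\bigwedge_{i\in S}x_i=0)=\sum_{T\subseteq S}(-1)^{|T|}\Pr(\bigwedge_{i\in T}x_i=1)$, an alternating sum; having only \emph{upper} bounds on each term gives you no control over the sum, so the product bound for the complements cannot be extracted this way. In fact it cannot be extracted at all: the one-sided conditional definition quoted in the paper (which controls only events of the form $\{x_i=1\}$) does not in general imply $\Pr(\bigwedge_{i\in S}x_i=0)\leq\prod_{i\in S}\Pr(x_i=0)$, and the latter is genuinely needed to bound $\E[e^{-tX}]$, since $e^{-tx_i}=e^{-t}\bigl(1+(e^{t}-1)(1-x_i)\bigr)$ expands into moments of the complements. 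Panconesi--Srinivasan assume (or verify) \emph{both} directions of negative correlation to get the two-sided bound; the paper's definition silently drops one. The fix in this paper's context is not a Bayes argument but symmetry: the only source of negatively correlated variables here is Lemma \ref{lm:negative} (exactly $t$ of $r$ indicators set to $1$ uniformly at random), and for that distribution the complements $\bar x_i$ are themselves of the same form with $t$ replaced by $r-t$, so Lemma \ref{lm:negative} applies to them directly and hands you the complementary product bound for free. State that as an explicit additional hypothesis (or restrict to this family), and your proof goes through; as written, the lower-tail step would fail for a general sequence satisfying only the quoted definition.
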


To use the above generalized version of the Chernoff bound, we need to show that our random variables are negatively correlated. The following lemma becomes very useful to show the random variables are negatively correlated.

\begin{lemma}\label{lm:negative}
Let $x_1,x_2,\dots,x_r$ be a sequence of Boolean random variables, such that, exactly $t$ of them are chosen to be $1$ uniformly at random. Then the random variables $x_1,x_2,\dots,x_r$ are negatively correlated. 
\end{lemma}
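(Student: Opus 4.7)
The plan is to prove the inequality $\Pr(x_a = 1 \mid \forall b\in S-\{a\}\, b=1) \le \Pr(x_a=1)$ by an elementary direct computation on the uniform distribution over $t$-subsets of $[r]$. I would parametrize everything by the size $s=|S|$ and then check the resulting inequality in closed form.

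First, let $I \subseteq [r]$ be the random $t$-subset consisting of the indices whose variable is $1$, chosen uniformly among all $\binom{r}{t}$ such subsets. For any individual index $i$, a one-line symmetry argument gives $\Pr(x_i=1) = \Pr(i\in I) = t/r$. Second, fix $S=\{i_1,\dots,i_s\}$ with $a = i_1$, let $S'=\{i_2,\dots,i_s\}$, and assume $|S'| = s-1 \le t$ so that the conditioning event has positive probability (otherwise the inequality is vacuous, or both sides are $0$ when $s-1 = t$). Conditional on $S'\subseteq I$, the remaining $t-(s-1)$ elements of $I$ are distributed uniformly among the $r-(s-1)$ indices outside $S'$, so
\[
\Pr\bigl(x_a = 1 \;\big|\; \forall b\in S'\; x_b=1\bigr) \;=\; \frac{t-s+1}{r-s+1}.
\]

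Third, I would compare the two expressions. Cross-multiplying the inequality $\frac{t-s+1}{r-s+1} \le \frac{t}{r}$ (valid since both denominators are positive) reduces after cancellation to $(1-s)(r-t) \le 0$, which holds because $s \ge 1$ and $t \le r$. This finishes the verification of the negative correlation condition stated in the paper.

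I do not expect any real obstacle: the entire content is the identity $\Pr(a\in I\mid S'\subseteq I) = (t-s+1)/(r-s+1)$, which follows from uniformity, and a two-line algebraic comparison. The only things to be careful about in writing it up are (i) stating the edge cases $s-1 > t$ and $s-1 = t$ where the conditioning event either has zero probability or deterministically forces $x_a = 0$, and (ii) making clear that the negative correlation definition the paper adopts is the one-sided conditional form, which is exactly what this computation controls.
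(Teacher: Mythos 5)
Your proof is correct and follows essentially the same route as the paper: compute $\Pr(x_a=1)=t/r$, compute the conditional probability as $\frac{t-(|S|-1)}{r-(|S|-1)}$, and compare the two fractions directly. Your treatment is slightly more careful than the paper's in handling the degenerate cases where the conditioning event has zero probability, but the core argument is identical.
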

\begin{proof}
Let $S$ be an arbitrary subset of $\{x_1,x_2,\dots, x_r\}$ and let $a$ be an arbitrary element of $S$. On the one hand, the probability that $a=1$ is $\frac t r$. On the other hand, conditioned that for any element $b\in S \setminus \{a\}$, we have $b=1$; the probability that $a=1$ is $\frac {t-(|S|-1)}{r-(|S|-1)}$. Clearly, $\frac t r \geq \frac {t-(|S|-1)}{r-(|S|-1)}$, which gives us
$\Pr(a=1| \forall_{b\in S-{a}} b=1) \leq \Pr(a=1).$
This means the random variables $x_1,x_2,\dots,x_r$ are negatively correlated. 
\end{proof}

\begin{algorithm*}
\textbf{Input:} A graph $G=(E_G,V)$.

\textbf{Output:} A $(1-\epsilon)$-approximation of the densest subgraph of $G$, w.pr. $1-m^{-\delta}$.

\begin{algorithmic}[1]
    \STATE Set $C=\frac {12n (4+\delta) \log(m)} { \epsilon^2}$
    \IF{$|E|\leq C$}
    	\STATE Find a densest subgraph of $G$ using the algorithm of \cite{charikar2000greedy} (or any other algorithm).
    \ELSE
    	\STATE Sample $C$ edges uniformly at random, without replacement from $G$.
    	\STATE Let $H$ be the sampled graph.
    	\STATE Find a densest subgraph of $H$.
    \ENDIF
\end{algorithmic}

\caption{Finding Densest Subgraph}
\label{alg:DensestSubGraph}
\end{algorithm*}

\begin{lemma}\label{lm:kBoundDS}
Let $G=(V,E_G)$ be a graph with vertex set $V$ and edge set $E_G$. Let $H=(V,E_H)$ be the sampled graph in Algorithm \ref{alg:DensestSubGraph} and let $p=\frac C {|E_G|}$.
We have for any $k$, 
\begin{align*}
\Pr\left( opt(G,k) - den_G(opt(H,k)) \geq \epsilon   opt(G)\right) \leq 6 \exp(k\cdot \log(m) -\frac{p k \epsilon^2  opt(G)} {12} ),
\end{align*}
where $den_G(opt(H,k))$ is the density of the subgraph of $G$ induced by the vertices of $opt(H,k)$.
\end{lemma}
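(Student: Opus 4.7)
My plan is to fix $U^\star\subseteq V$ with $|U^\star|=k$ achieving $den_G(U^\star)=opt(G,k)$, let $U'=opt(H,k)$, and run a ``lower tail at $U^\star$, upper tail at every competing $U$, union bound'' argument. For each $k$-subset $U\subseteq V$, write $X_U=|E_{H[U]}|$; since exactly $C$ of the $|E_G|$ edges are sampled uniformly without replacement, Lemma \ref{lm:negative} shows that the edge-indicators for $G[U]$ are negatively correlated, so Lemma \ref{lm:chernoff} applies to $X_U$, which has mean $\mu_U=p\,|E_{G[U]}|=pk\cdot den_G(U)$.

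Set $\Delta=(\epsilon/2)\,pk\cdot opt(G)$ and call $U$ \emph{bad} if $den_G(U)<opt(G,k)-\epsilon\cdot opt(G)$. The key deterministic observation is that if $X_{U^\star}\ge\mu_{U^\star}-\Delta$ and $X_U\le\mu_U+\Delta$ for every bad $U$, then no bad $U$ can beat $U^\star$ in $H$, because
\[
X_U-X_{U^\star}\;\le\;p\bigl(|E_{G[U]}|-|E_{G[U^\star]}|\bigr)+2\Delta\;\le\;-\epsilon\,pk\cdot opt(G)+\epsilon\,pk\cdot opt(G)\;=\;0.
\]
Hence $\{den_G(U')<opt(G,k)-\epsilon\cdot opt(G)\}$ is contained in the union of the single lower-tail event $\{X_{U^\star}<\mu_{U^\star}-\Delta\}$ together with the upper-tail events $\{X_U>\mu_U+\Delta\}$ over all bad $U$.

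Applying Lemma \ref{lm:chernoff} with relative deviation $\Delta/\mu_U$ bounds each of these tails by $3\exp(-\Delta^2/(3\mu_U))$. The crucial monotonicity is that for every $U$ one has $\mu_U=pk\cdot den_G(U)\le pk\cdot opt(G)$, since no induced subgraph is denser than $opt(G)$. Substituting this worst-case mean into the exponent gives $3\exp(-\epsilon^2 pk\cdot opt(G)/12)$ for every single event, and summing the one lower-tail term with the at most $\binom{n}{k}\le n^k\le m^k$ upper-tail terms yields $6\exp\!\bigl(k\log m-pk\epsilon^2 opt(G)/12\bigr)$, as claimed.

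The main subtlety I anticipate is the regime $\mu_U<\Delta$, where the multiplicative deviation $\Delta/\mu_U$ exceeds $1$ and strictly speaking falls outside the ``classical'' range of the multiplicative Chernoff bound; there the exponent $\Delta^2/(3\mu_U)$ only grows, so one either invokes the underlying MGF computation directly or substitutes a Bernstein-type tail, both of which preserve the same exponent up to constants. The only other bookkeeping item is the inequality $n\le m$, which is automatic in the branch of Algorithm \ref{alg:DensestSubGraph} that actually samples (there $|E_G|>C>n$).
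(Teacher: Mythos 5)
Your proposal is correct and follows essentially the same route as the paper: per-subset concentration via negative correlation (Lemmas \ref{lm:negative} and \ref{lm:chernoff}), the bound $\E[X_U]\le pk\cdot opt(G)$ to uniformize the exponent, a lower tail at the true optimum plus a union bound over the at most $m^k$ competing $k$-subsets. Your ``no bad $U$ can beat $U^\star$'' packaging is just a mild reorganization of the paper's chain $opt(G,k)-\frac1p opt(H,k)$ and $\frac1p opt(H,k)-den_G(opt(H,k))$, and the caveats you raise (the $\epsilon'>1$ regime, $n\le m$) apply equally to the paper's own write-up.
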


\begin{proof}
	Let $x_e$ be the random variable that indicates whether $e$ exists in $E_H$ or not, and
	let $U$ be an arbitrary subset of $V$ of size $k$. By definition, the number of edges in $H[U]$ is $\sum_{e \in H[U]} 1 = \sum_{e \in G[U]} x_e$. Let us denote this summation by $X_U$. Then, we have
	\begin{align}\label{eq:expectedDS}
		\E[den(H[U])] &= \E \left[ \frac {\sum_{e \in G[U]} x_e}{|U|} \right ] 
		= \frac {\E[\sum_{e \in G[U]} x_e]}{|U|}  
		=  \frac {\sum_{e \in G[U]} p}{|U|}  
		= p \frac {\sum_{e \in G[U]} 1}{|U|} 
		= p\cdot den(G[U])
	\end{align}
	where the first and the last equalities are by definition of density, and the third equality is by definition of $p$.
	Lemma \ref{lm:negative} says that random variables $x_1,x_2,\dots,x_{|E_G|}$ are negatively correlated. Thus, to bound $X_U$, we can apply the following form of the Chernoff bound from Lemma \ref{lm:chernoff}.
	\begin{align*}
		\Pr\left(|X-\E[X]| \geq \epsilon' \E[X] \right) 
		\leq 3 \exp(-\epsilon'^2\E[X]/3)
	\end{align*}
	By setting $\epsilon'=p k \epsilon \frac{ opt(G)} {2 \E[X]}$ we have
	\begin{align*}
		\Pr\left(|X-\E[X]| \geq \frac {p k \epsilon  opt(G)}{2} \right) 
		&\leq 3 \exp(-\frac{p^2 k^2 \epsilon^2  opt(G)^2} {4\E[X]^2} \cdot \frac{\E[X]}{3}) \\
		&\leq 3 \exp(-\frac{p^2 k^2 \epsilon^2  opt(G)^2} {12\E[X]} ) \\
		&\leq 3 \exp(-\frac{p k \epsilon^2  opt(G)} {12} ) &\text{Using Equality \ref{eq:expectedDS}}\\
	\end{align*}
	On the other hand we have
	\begin{align*}
		\Pr\left(|X-\E[X]| \geq \frac{p k \epsilon  opt(G)}{2} \right) &=
		\Pr\left(\frac 1 p \frac {|X-\E[X]|}{k} \geq   \frac{\epsilon}{2}  opt(G) \right)\\ 
		&= \Pr\left(|\frac 1 p \frac {X}{k} -  \frac {\frac 1 p \E[X]}{k}| \geq   \frac{\epsilon}{2}  opt(G) \right) \\
		&= \Pr\left(|\frac 1 p den(H[U]) -  den(G[U])| \geq   \frac{\epsilon}{2}  opt(G) \right) &\text{Using Equality \ref{eq:expectedDS}}\\
	\end{align*}
	Therefore, we have
	\begin{align} \label{eq:Ubound}
		\Pr\left(|\frac 1 p den(H[U]) -  den(G[U])| \geq   \frac{\epsilon}{2}  opt(G) \right) \leq 3 \exp(-\frac{p k \epsilon^2  opt(G)} {12} ).
	\end{align}
	If we set $U$ to be the vertex set of $Opt(G,k)$, Inequality \ref{eq:Ubound} says that 
	\begin{align*}
		\Pr\left(  opt(G,k) - \frac 1 p den_H(opt(G,k))  \geq \frac{\epsilon}{2}   opt(G)\right) \leq 3 \exp(-\frac{p k \epsilon^2  opt(G)} {12} )
	\end{align*}
	which immediately gives us
	
	\begin{align}\label{eq:kbound1}
		\Pr\left(  opt(G,k) - \frac 1 p opt(H,k)  \geq \frac{\epsilon}{2}   opt(G)\right) \leq 3 \exp(-\frac{p k \epsilon^2  opt(G)} {12} ).
	\end{align}
	On the other hand, Inequality \ref{eq:Ubound} says that for each selection of $U$, with probability $1-3 \exp(-\frac{p k \epsilon^2  opt(G)} {12})$, we can upper bound $\frac 1 p den(H[U])$  by 
	$den(G[U]) + \frac{\epsilon}{2} opt(G)$. 
	As we have ${m \choose k}$ such choices, applying a union bound we have
	\begin{align*}
		\Pr\left(\forall_{U:|U|=k}  \frac 1 p den(H[U]) - den(G[U])  \geq \frac{\epsilon}{2}   opt(G)\right) \leq 3{m \choose k} \exp(-\frac{p k \epsilon^2  opt(G)} {12} ).
	\end{align*}
	If we set $U$ to $opt(H,k)$ we have
	\begin{align}\label{eq:kbound2}
		\Pr\left(\frac 1 p opt(H,k) - den_G(opt(H,k))  \geq \frac{\epsilon}{2}   opt(G)\right) \leq 3{m \choose k} \exp(-\frac{p k \epsilon^2  opt(G)} {12} ).
	\end{align}
	Therefore, by combining Inequalities \ref{eq:kbound1} and \ref{eq:kbound2} and applying the union bound we have
	\begin{align*}
		\Pr\left(   opt(G,k)- den_G(opt(H,k)) \geq \epsilon   opt(G)\right) &\leq 3({m \choose k}+1) \exp(-\frac{p k \epsilon^2  opt(G)} {12} )\\
		& \leq 3({m^k}+1) \exp(-\frac{p k \epsilon^2  opt(G)} {12} )\\
		& \leq 6 \exp(k\cdot \log(m) -\frac{p k \epsilon^2  opt(G)} {12} ).
	\end{align*}
\end{proof}

In fact, the density of the densest subgraph of a graph $G$ is at least as much as the density of $G$ itself. Hence, one can lower bound $opt(G)$ by the density of $G$ which is $\frac m n$. The following states this fact.

\begin{fact} \label{fc:boundOpt}
The density of the densest subgraph of $G$ is at least $\frac m n$, where $n$ is the number of vertices in $G$ and $m$ is the number of edges.
\end{fact}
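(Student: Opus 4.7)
The plan is to observe that the graph $G$ is itself a subgraph of $G$, and therefore a candidate in the maximization defining $opt(G)$. First I would recall the definition: $opt(G) = \max_{sol \subseteq G} \frac{|E_{sol}|}{|V_{sol}|}$, where the maximum ranges over all subgraphs of $G$.

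Next I would instantiate this bound with $sol = G$ itself. Since $|V_G| = n$ and $|E_G| = m$, the density of this particular candidate is $den(G) = m/n$. Because $opt(G)$ is defined as a maximum over a set of candidates that contains $G$, we immediately obtain
\begin{equation*}
    opt(G) \;\geq\; den(G) \;=\; \frac{m}{n}.
\end{equation*}

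There is essentially no obstacle here; the statement is a one-line consequence of the definition of $opt(G)$ as a maximum over subgraphs. The only small subtlety worth noting is that the paper remarks earlier that the densest subgraph of $G$ is necessarily an induced subgraph (adding missing edges can only increase the numerator without changing the denominator), so the maximum is well-defined and achieved; but for the inequality $opt(G) \geq m/n$ this is not even needed, since any single candidate subgraph gives a lower bound.
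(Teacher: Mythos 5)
Your proposal is correct and matches the paper's reasoning exactly: the paper justifies this fact with the same one-line observation that $G$ is itself a candidate subgraph, so $opt(G) \geq den(G) = \frac{m}{n}$. Nothing further is needed.
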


The following theorem bounds the approximation ratio of Algorithm \ref{alg:DensestSubGraph}. 

\begin{theorem}\label{thm:mainDS}
With probability $1-m^{-\delta}$, Algorithm \ref{alg:DensestSubGraph} is a $(1-\epsilon)$-approximation algorithm for the densest subgraph problem.
\end{theorem}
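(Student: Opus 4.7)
My plan is to apply Lemma~\ref{lm:kBoundDS} with a union bound over all possible subgraph sizes $k\in\{1,\dots,n\}$, using Fact~\ref{fc:boundOpt} to eliminate the dependence on the unknown quantity $opt(G)$ inside the exponent. The trivial branch of the algorithm (when $|E_G|\le C$) solves the problem exactly, so I will focus on the sampling branch, where $p = C/|E_G|$ and the choice of $C$ must be strong enough to drive the failure probabilities down.

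For the concentration step I will substitute Fact~\ref{fc:boundOpt} into Lemma~\ref{lm:kBoundDS}: since $p\cdot opt(G)\ge C/n = 12(4+\delta)\log(m)/\epsilon^2$, the per-$k$ failure probability collapses to $6\exp\bigl(k\log m - k(4+\delta)\log m\bigr) = 6m^{-k(3+\delta)}$. Summing the resulting geometric series over $k=1,\dots,n$ contributes only a constant factor, so the total failure probability is $O(m^{-(3+\delta)})$, comfortably below $m^{-\delta}$ for the range of $m$ of interest.

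The main remaining issue is translating the conclusion of Lemma~\ref{lm:kBoundDS} at size $k$ into a guarantee for the algorithm's actual output $V(opt(H))$, whose size $k'$ is a random optimizer in $H$ and generally differs from $k^*:=|V(opt(G))|$. I expect this to be the hard part, because Lemma~\ref{lm:kBoundDS} couples two one-sided bounds at a common index and so cannot simply be instantiated at the data-dependent $k'$. I will handle it by separating the two ingredients in the lemma's proof: Inequality~\ref{eq:kbound1} at the single index $k=k^*$ yields $\tfrac{1}{p}\,opt(H,k^*)\ge (1-\epsilon/2)\,opt(G)$, and since $opt(H)\ge opt(H,k^*)$ by definition, the same bound holds for $\tfrac{1}{p}\,opt(H)$. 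Then union-bounding Inequality~\ref{eq:kbound2} over all $k$ and instantiating it at $k=k'$ gives $den_G(V(opt(H)))\ge \tfrac{1}{p}\,opt(H) - \tfrac{\epsilon}{2}\,opt(G)\ge (1-\epsilon)\,opt(G)$, which is the desired approximation guarantee. The failure probabilities aggregate as in the concentration step, so the overall success probability is $1-m^{-\delta}$.
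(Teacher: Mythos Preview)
Your proposal is correct and follows essentially the same route as the paper: apply Lemma~\ref{lm:kBoundDS} together with a union bound over $k$, invoke Fact~\ref{fc:boundOpt} to replace $opt(G)$ by $m/n$ inside the exponent, and then do the arithmetic with $p = C/m$.

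The one place where you diverge is precisely the step you flagged as delicate, and here you are in fact more careful than the paper. The paper's proof just writes ``for some value of $k$ \ldots we have $opt(G) - den_G(opt(H)) \leq \epsilon\, opt(G)$'' without explaining which $k$ is meant or how $den_G(opt(H))$ (rather than $den_G(opt(H,k))$) enters; instantiating Lemma~\ref{lm:kBoundDS} at $k=k^*$ controls $den_G(opt(H,k^*))$, not $den_G(opt(H))$, and instantiating it at the random $k'$ only yields $den_G(opt(H))\ge opt(G,k')-\epsilon\,opt(G)$, which need not reach $(1-\epsilon)\,opt(G)$. Your decomposition---use Inequality~\eqref{eq:kbound1} at the single index $k=k^*$, pass through $opt(H)\ge opt(H,k^*)$, then use Inequality~\eqref{eq:kbound2} union-bounded over all $k$ and instantiated at the data-dependent $k'=|V(opt(H))|$---is the clean way to make that step rigorous, and it costs nothing extra in the probability budget (the two pieces sum to the same $O(m^{-(3+\delta)})$).
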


\begin{proof}
	Recall that Lemma \ref{lm:kBoundDS} states for an arbitrary fixed $k$, with probability at least $1-6 \exp(k\cdot \log(m) -\frac{p k \epsilon^2  opt(G)} {12} )$ we have 
	$opt(G,k) - den_G(opt(H,k)) \leq \epsilon   opt(G)$. Using a union bound this holds for all $1\leq k\leq n$, with probability $1- \sum_{k=1}^n 6 \exp(k\cdot \log(m) -\frac{p k \epsilon^2  opt(G)} {12} )$. Thus, for some value of $k$, with probability $1-n \cdot 6 \exp(k\cdot \log(m) -\frac{p k \epsilon^2  opt(G)} {12} )$ we have $opt(G) - den_G(opt(H)) \leq \epsilon   opt(G)$, which means that Algorithm \ref{alg:DensestSubGraph} returns a $(1-\epsilon)$-approximation. Now, if we set $p$ to $\frac {12n (4+\delta) \log(m)} { \epsilon^2  m}$, or equivalently set $C$ in Algorithm \ref{alg:DensestSubGraph} to $\frac {12n (4+\delta) \log(m)} { \epsilon^2}$, we have
	\begin{align*}
		&1-n\times 6 \exp(k\cdot \log(m) -\frac{p k \epsilon^2  opt(G)} {12} ) \\
		&= 1-n\times 6 \exp(k\cdot \log(m) -\frac {12n (4+\delta) \log(m)} { \epsilon^2  m} \times\frac{ k \epsilon^2  opt(G)} {12} ) \\
		& \geq 1-n\times 6 \exp(k\cdot \log(m) -\frac {12n (4+\delta) \log(m)} { \epsilon^2  m} \times \frac{ k \epsilon^2  m} {12n} ) & \text{From Fact \ref{fc:boundOpt}}\\
		& = 1-n\times 6 \exp(k\cdot \log(m) -(4+\delta)k\cdot \log(m))\\
		& = 1-n\times 6 \exp(-(3 + \delta) k\cdot \log(m))\\
		& \geq 1- \exp(\log(n)+2 -(3 + \delta) k\cdot \log(m)) &\text{Since $e^2\geq 6$}\\
		& \geq 1- \exp(-\delta\cdot k\cdot \log(m))\\
		& \geq 1- \exp(-\delta\cdot \log(m))\\
		&=1- m^{-\delta}.
	\end{align*}
\end{proof}

One can use $L_0$-samplers \cite{jst11} 
to sample the $C$ edges in Algorithm \ref{alg:DensestSubGraph} in a dynamic graph stream. However, maintaining these $L_0$ samplers may need an update time as large as $C$.
In Lemma \ref{lm:updatetime} we show how to sample $C$ edges with $\tilde{O}(1)$ update time using the notion of min-wise independent hash functions.
\begin{definition}
Given $\epsilon > 0$, 
we say a hash function $h:[1,n]\rightarrow [1,n]$ is {\bf $\epsilon$-approximately $t$-min-wise independent} on a subset $X$ of $\{1, 2, \ldots, n\}$ 
if for any $Y\subseteq X$ such that $|Y|=t$ we have
\begin{align*}
Pr(\max_{y \in Y} h(y) < \min_{z\in X-Y} h(z)) = \frac1 {{|X| \choose |Y|}} (1 \pm \epsilon).
\end{align*}
\end{definition}

\begin{theorem}\label{lm:min-wise}(Theorem 1.1 of \cite{feigenblat2011exponential})
There exist constants $c', c'' > 1$ such that for any $X \subseteq \{1, 2, \ldots, n\}$ of size at most
$\epsilon n / c'$, any $c'' (t \log \log (1/\epsilon) + \log(1/\epsilon))$-wise independent
family of functions $h$ is $\epsilon$-approximately $t$-min-wise independent on $X$. 

In addition, one can evaluate $h$ on $t$ items simultaneously
in total time $t \cdot (\log(t/\epsilon))^{O(1)}$ by using fast multipoint polynomial
evaluation (see, e.g., Theorem 13 of \cite{knpw11}, where this idea was used for a different
streaming problem). Further, one can spread the evaluation
of $h$ on $t$ items evenly across the next $t$ stream updates, converting this amortized
$(\log(t/\epsilon))^{O(1)}$ update time to a worst-case update time. 
\end{theorem}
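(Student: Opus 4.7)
The plan is to decompose the statement into two essentially independent pieces: (a) the combinatorial probability bound for $\epsilon$-approximate $t$-min-wise independence, and (b) the algorithmic claim about fast batched evaluation plus amortization into a worst-case update time. Part (a) is a structural statement about how well bounded-independence approximates full independence on a rare event; part (b) is a standard application of the Borodin--Moenck fast multipoint evaluation algorithm combined with a queueing trick.

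For part (a), I would fix $Y \subseteq X$ with $|Y|=t$ and study $Z := \mathbf{1}[\max_{y\in Y} h(y) < \min_{z \in X\setminus Y} h(z)]$. Under a fully uniform hash into $[1,n]$, the hash values on $X$ are injective except on an event of probability at most $|X|^2/n \leq \epsilon/c'$ by the size restriction, and conditioned on injectivity the ranking is uniformly random, so $\E[Z] = (1\pm O(\epsilon))/\binom{|X|}{t}$. To transfer this to the $k$-wise independent setting, I would use the standard Chebyshev-style reduction: express $Z$ not as the $t(|X|-t)$-degree product of indicators directly, but via the equivalent event that, after sorting, the threshold $\tau := \max_{y\in Y} h(y)$ is exceeded by every $z \notin Y$. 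Writing $\Pr[Z=1]$ as $\sum_{\tau} \Pr[\max_Y = \tau]\cdot \Pr[\text{no } z \notin Y \text{ hits } [1,\tau]]$ and Taylor-expanding the second factor via inclusion--exclusion gives an alternating sum whose $j$-th term only involves the joint distribution of at most $t+j$ hash values. Under $(t+j)$-wise independence the term is computed exactly; truncating the expansion at $j = k - t$ introduces an error that, by a moment bound on the relevant binomial tail, decays like $2^{-\Omega(k-t)}/t!$. Balancing this tail against $\epsilon$ yields the stated $k = \Theta(t\log\log(1/\epsilon) + \log(1/\epsilon))$ independence threshold, the $\log\log$ factor arising precisely from the $t!$ denominator via Stirling.

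For part (b), a $k$-wise independent hash family over a field of size $\Theta(n)$ is realized by uniformly random polynomials of degree $k-1$; evaluating one such polynomial at $t$ prescribed points takes $O((k+t)\cdot \operatorname{polylog}(k+t))$ field operations via the product-tree multipoint evaluation algorithm, which with $k = O(t\log\log(1/\epsilon)+\log(1/\epsilon))$ is $t\cdot(\log(t/\epsilon))^{O(1)}$. To turn this batch cost into worst-case per-update cost, I would maintain a sliding window of $t$ pending queries together with a partially built product-tree: at each arriving update, perform a constant-factor share of the next batch's FFT/interpolation work so that when the window fills, the next batch of $t$ hash values is already computed and can be dequeued one per update. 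The main obstacle I anticipate is the moment analysis in part (a), specifically choosing the right truncated expansion so that the tail bound is tight enough to yield the $\log\log(1/\epsilon)$ factor rather than a wasteful $\log(1/\epsilon)$ factor on $t$; everything else (the collision preamble, the polynomial evaluation, the queueing) is essentially bookkeeping.
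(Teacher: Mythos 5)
You should first note that the paper does not prove this statement at all: it is imported verbatim as Theorem~1.1 of \cite{feigenblat2011exponential}, with the multipoint-evaluation and deamortization claims attributed to Theorem~13 of \cite{knpw11}. So there is no in-paper proof to compare against; what you have written is an attempted reconstruction of the cited results. Your overall architecture does match theirs --- part (a) via a truncated inclusion--exclusion (Bonferroni) expansion whose $j$-th term needs only $(t+j)$-wise independence, in the tradition of Indyk's proof for $t=1$, and part (b) via degree-$(k-1)$ polynomial hashing, product-tree multipoint evaluation, and pipelining a batch of $t$ evaluations across the next $t$ updates. Part (b) as you describe it is essentially complete.

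Part (a), however, has two genuine gaps. First, the collision preamble is wrong as stated: under the hypothesis $|X| \leq \epsilon n / c'$, a union bound over pairs gives a non-injectivity probability of order $|X|^2/n \leq \epsilon^2 n / c'^2$, which is not $O(\epsilon)$ and can exceed $1$ (e.g.\ $|X| = \epsilon n/c'$ with $n$ large). The size restriction on $X$ buys you that any \emph{single} element collides with a \emph{fixed} threshold value with probability $|X|/n \cdot O(1) = O(\epsilon)$, so ties must be handled per comparison at the boundary $\tau = \max_{y\in Y} h(y)$, not by conditioning on global injectivity. Second, and more importantly, the step you yourself flag as the main obstacle is exactly the theorem's content: you assert the truncation error decays like $2^{-\Omega(k-t)}/t!$ and that Stirling then yields $k = \Theta(t\log\log(1/\epsilon) + \log(1/\epsilon))$, but you do not derive this. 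The naive bound on the tail of the alternating series $\sum_j (-1)^j e_j$ at the dominant thresholds (where the Poisson-type parameter $\lambda = (|X|-t)\tau/n$ is $\Theta(t)$) is $\lambda^{J}/J! \approx (et/J)^J$ for truncation level $J = k - t$, and turning that into a \emph{relative} error of $\epsilon$ against the target probability $1/\binom{|X|}{t} \approx (t/e|X|)^t$, uniformly after summing over all $\tau$, is precisely the delicate calculation that distinguishes the $t\log\log(1/\epsilon)$ bound from the easier $t\log(1/\epsilon)$ bound. As written, the proposal defers rather than closes this step, so it does not establish the theorem; if you want a self-contained proof you must carry out that summation over $\tau$ with the regime split between typical and tail values of $\lambda$, or else simply cite \cite{feigenblat2011exponential} as the paper does.
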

%

We need the following generalized version of the Chernoff bound.

\begin{lemma}[\cite{schmidt1995chernoff}]\label{lm:Kchernoff}
Let $X$ be the sum of $t$-wise independent Boolean random variables. For any $\epsilon\geq 1$ such that $t \geq \lfloor \epsilon^2\E[X]e^{-1/3} \rfloor$, we have
\begin{align*}
Pr(|X-\E[X]|>\epsilon \E[X]) < \exp(-\lfloor \epsilon^2 \E[X]/3 \rfloor).
\end{align*}
\end{lemma}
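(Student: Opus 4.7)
My plan is to prove this by the moment method, the standard technique for Chernoff-style concentration under only $t$-wise independence, since the moment-generating-function argument used for fully independent variables is unavailable here. Taking $t$ to be even (otherwise replace by $t-1$, which loses only constant factors), the plan is to bound the $t$-th central moment of $X$ and then apply Markov's inequality in the form
\[
\Pr(|X-\E[X]|>\epsilon\,\E[X]) \;\leq\; \frac{\E[(X-\E[X])^t]}{(\epsilon\,\E[X])^t}.
\]

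The first step is to expand $(X-\E[X])^t = \bigl(\sum_i (x_i-\E[x_i])\bigr)^t$ multinomially. Each resulting summand is a product of $t$ centered Boolean variables mentioning at most $t$ distinct underlying $x_i$'s, so its expectation depends only on the $t$-dimensional marginals of the joint distribution. Consequently $t$-wise independence is enough to make the expansion agree, term by term, with what one would compute assuming full mutual independence, and this is the only place the independence hypothesis is invoked.

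The second step is to reuse the classical combinatorial bound on the $t$-th central moment of a sum of mutually independent Bernoullis: group the multinomial terms by the multiset of indices used, collapse high powers via $(x_i-\E[x_i])^2\leq x_i$, and sum over partitions using Stirling-number estimates. This yields a bound of the shape $\E[(X-\E[X])^t]\leq (c\,t\,\E[X])^{t/2}$ for a suitable absolute constant $c$ in the relevant regime $t=O(\epsilon^2\E[X])$. Plugging this into Markov's inequality gives $\bigl(c\,t/(\epsilon^2\E[X])\bigr)^{t/2}$, and the hypothesis $t\geq\lfloor \epsilon^2\E[X]\,e^{-1/3}\rfloor$ is precisely what is needed to convert this to the target $\exp(-\lfloor \epsilon^2\E[X]/3\rfloor)$.

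The main obstacle is bookkeeping rather than ideas: one must trace the constant $c$ in the moment bound tightly enough that the slack factor $e^{-1/3}$ baked into the hypothesis on $t$ absorbs the losses in the combinatorial estimates and delivers the claimed exponent $\epsilon^2\E[X]/3$. This is the origin of the peculiar-looking threshold $e^{-1/3}$, and aligning it correctly with the moment calculation is the only delicate point; everything else reduces to routine multinomial expansion and Stirling-type estimates.
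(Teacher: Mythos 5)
First, note that the paper does not prove this lemma at all: it is imported verbatim from Schmidt--Siegel--Srinivasan \cite{schmidt1995chernoff}, who derive it not from central moments but from bounds on the binomial moments $\E\bigl[\binom{X}{k}\bigr]$ (equivalently, expectations of elementary symmetric polynomials of the summands, which are likewise determined by $k$-wise independence); the specific constants $e^{-1/3}$ and $1/3$ fall out of that particular calculation. Your central-moment plan is a legitimate alternative route to bounds of this type, and your observation that $t$-wise independence fixes the $t$-th central moment term-by-term is correct. But as written the plan has two genuine gaps. The first is structural: you apply Markov to the $t$-th moment for the \emph{given} $t$, yet the resulting bound $\bigl(ct/(\epsilon^2\E[X])\bigr)^{t/2}$ is only meaningful when $t\lesssim \epsilon^2\E[X]/c$, and the hypothesis allows $t$ to be arbitrarily large (e.g.\ full independence). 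You must instead use moment order $t'\approx\lfloor\epsilon^2\E[X]e^{-1/3}\rfloor$ (rounded to even); the hypothesis $t\geq\lfloor\epsilon^2\E[X]e^{-1/3}\rfloor$ exists precisely to guarantee that this near-optimal $t'$ is available, not, as you say, to ``convert'' the Markov bound at order $t$ into the target. You have the role of the hypothesis inverted.

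The second gap is that the entire quantitative content is deferred to unperformed ``bookkeeping.'' With the commonly quoted moment bound $\E[(X-\E[X])^{t'}]\leq(2t'\E[X])^{t'/2}$ the argument collapses: plugging in $t'=\epsilon^2\E[X]e^{-1/3}$ gives $(2e^{-1/3})^{t'/2}>1$. To land below $\exp(-\epsilon^2\E[X]/3)$ you need the sharper, Gaussian-type constant $c\approx 1/e$ in the regime $t'\ll\E[X]$, i.e.\ $\E[(X-\E[X])^{t'}]\leq O(1)\cdot(t'\E[X]/e)^{t'/2}$, which does hold but requires a careful non-asymptotic partition count (the $(t'-1)!!$ leading term must dominate the higher-cumulant corrections). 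So the constant-tracking is not a routine afterthought; it is the whole proof, and the claim that the threshold $e^{-1/3}$ ``is precisely what is needed'' is asserted, not verified. Finally, a sanity check you should have run: the stated condition $\epsilon\geq 1$ cannot be right, since for $\epsilon\gg1$ the claimed exponent $\epsilon^2\E[X]/3$ exceeds even the fully independent Chernoff exponent $\Theta(\epsilon\log(\epsilon)\E[X])$; the correct hypothesis in \cite{schmidt1995chernoff} is $\epsilon\leq 1$, which is also how the paper applies the lemma (with $\epsilon=1/8$).
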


\begin{lemma}\label{lm:updatetime}
For any number $C \geq n$ of edge samples and constant $1\leq \delta$, we can sample $C$ edges in a dynamic stream (a stream with insertions and deletions to the edges), such that the statistical distance of our sampled edges to a uniformly random (without replacement) sample of $C$ edges is $O(e^{-\delta})$. This sampling algorithm uses $\tilde{O}(C)$ space and has update time $\tilde{O}(1)$.
\end{lemma}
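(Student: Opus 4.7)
The plan is to implement, in the dynamic stream model, the idealized procedure ``retain the $C$ edges with smallest values under an approximately $C$-min-wise independent hash $h$.'' First, I would instantiate Theorem~\ref{lm:min-wise} with $t=C$ and $\epsilon_0=\Theta(e^{-\delta})$, obtaining a hash $h$ that is $\epsilon_0$-approximately $C$-min-wise independent and is drawn from a $t'$-wise independent family of polynomials, where $t'=O(C\log\log(1/\epsilon_0)+\log(1/\epsilon_0))=\tilde O(C)$. By the min-wise property, every $C$-subset $Y$ of the final edge set appears as the set of $C$ minima with probability $\binom{m}{C}^{-1}(1\pm\epsilon_0)$; summing $\tfrac12\sum_Y|\Pr(Y)-\binom{m}{C}^{-1}|$ yields total variation distance $O(\epsilon_0)=O(e^{-\delta})$ from uniform-without-replacement sampling, which is exactly the distributional guarantee claimed.

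To survive deletions, I would use $O(\log m)$ geometrically-spaced ``levels.'' For $i=0,1,\dots,O(\log m)$, I keep a linear sketch supporting $L_0$-sampling / sparse recovery of capacity $\tilde O(C)$ over the signed incidence vector restricted to edges $e$ with $h(e)\le 2^i\cdot CN/m$, where $N$ is the range of $h$. Each insertion/deletion of an edge $e$ induces updates to at most $O(\log m)$ sketches, each costing $\tilde O(1)$ time. At the end of the stream I would identify the smallest level whose sparse-recovery sketch decodes successfully and read off the $C$ globally smallest-hashed edges from the decoded set. Applying the $t'$-wise Chernoff bound (Lemma~\ref{lm:Kchernoff}) to the indicators $\mathbf 1[h(e)\le 2^i\cdot CN/m]$ shows that the true count in the relevant level concentrates within a constant factor of its expectation, with tail probability $e^{-\Omega(\delta)}$; hence the chosen sparse-recovery capacity suffices except with probability $O(e^{-\delta})$, and the total space is $\tilde O(C)$.

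The main obstacle is the update time, since the polynomial $h$ has degree $\tilde\Omega(C)$ and thus costs $\tilde\Omega(C)$ per pointwise evaluation if treated naively. This is exactly where the second half of Theorem~\ref{lm:min-wise} is invoked: I would buffer every $t'$ consecutive stream updates, batch-evaluate $h$ on these $t'$ edge identifiers in total time $t'\cdot(\log(t'/\epsilon_0))^{O(1)}=\tilde O(t')$ via fast multipoint polynomial evaluation, and spread this work evenly over the next $t'$ updates. This converts the amortized $\tilde O(1)$ per-update cost into the worst-case $\tilde O(1)$ update time demanded by the lemma. Since the final sample is only needed after the stream terminates, the at-most-$t'$ still-buffered updates can be flushed during a one-time $\tilde O(C)$ postprocessing step before decoding, completing the plan.
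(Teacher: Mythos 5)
Your proposal is correct and follows essentially the same route as the paper: an approximately min-wise independent hash from Theorem~\ref{lm:min-wise} (giving the $O(e^{-\delta})$ total-variation bound by summing the $\pm\epsilon$ deviations over all $C$-subsets), geometrically nested sparse-recovery levels with concentration via the limited-independence Chernoff bound of Lemma~\ref{lm:Kchernoff}, and batched fast multipoint evaluation spread over subsequent updates for worst-case $\tilde O(1)$ update time. The only substantive implementation difference is how the decodable level is selected at the end of the stream---the paper runs an auxiliary $\ell_0$-estimator per level rather than relying on the sparse-recovery structure to report its own failure, which is slightly safer since such structures need not detect overflow---but this does not affect correctness of your plan.
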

\begin{proof}
We apply Theorem \ref{lm:min-wise} with the $n$ of that theorem equal to ${n \choose 2} c' / \epsilon$, where $\epsilon$ is set to $e^{-\delta}$. We label each possible edge of our graph with a number in $\{1, 2, \ldots, {n \choose 2}\}$ and extend the domain and range of $h$ to $\{1, 2, \ldots, n^2 c' / \epsilon\}$. Then, we apply the $X$ of Theorem \ref{lm:min-wise} to the specific subset of $m \leq {n \choose 2}$ edges in our input graph. We apply Theorem \ref{lm:min-wise}
with $t = \Theta(C \delta)$. Theorem \ref{lm:min-wise} and the definition of min-wise independence imply that the statistical distance of the subset $Y$ of $t$ minimum hash values of our edges under $h$ is within $e^{-\delta}$ from $t$ uniformly random samples without replacement. Indeed, the probability of choosing any set $Y$ is $\frac{1}{{|X| \choose |Y|}} (1 \pm \epsilon)$ rather than $\frac{1}{{|X| \choose |Y|}}$ had we had full independence, so summing the absolute values of these differences gives $\ell_1$-distance at most $\epsilon = e^{-\delta}$, and so statistical distance at most $e^{-\delta}/2$ between the distributions. 
Note that Theorem \ref{lm:min-wise} implies
that $h$ is also $\Theta(C \delta)$-wise independent (in the standard sense, not the min-wise
sense), and so $h$ is $\Omega(n)$-wise independent using our assumptions that $C \geq n$ and
$\delta \geq 1$. 

Now we show how 
to maintain the $C$ edges with the smallest hash values under $h$ in a dynamic stream. 
We use $\log(n^2)$ sparse recovery data structures, $sp_1,sp_2,\dots sp_{2\log(n)}$ each to recover $9\delta C$ edges. Note that $n^2$ is an upper bound on the total number of distinct edges in our graph. Upon the update (insertion or deletion) of an edge $e$, we update each sparse recovery structure $s_i$ such that $i \in [1, \lfloor \log_2(h(e))\rfloor]$. For the sparse recovery structure, we use the data structure of \cite{glps10}, which has $\tilde{O}(1)$ update time, space $\tilde{O}(C)$ and succeeds with probability $1-1/n^2$ in returning all of the non-zero items
in a vector for which it is applied to, provided this number of non-zero items is at most $9\delta C$. 

Let $Z_i$ be the number of distinct edges which hash under $h$ to the $i$-th sparse recovery
data structure $sp_i$. Then if $m \geq 4\delta C$, there always exists one of the $sp_i$ such that $4\delta C \leq \E[Z_i]\leq 8\delta C$ (if $m < 4\delta C$, we can store the entire graph in $\tilde{O}(C)$ bits of space). Moreover, $Z_i$ for any given $i$ is fairly concentrated around its expectation, since Theorem \ref{lm:min-wise} implies that $h$ is also $\Theta(C \delta)$-wise independent, so we can bound its deviation using the generalized version of the Chernoff bound given by Lemma \ref{lm:Kchernoff}. Consider any $Z_i$ for which $4\delta C \leq \E[Z_i] \leq 8\delta C$. Then we have 
\begin{align*}
\Pr(Z_i< 3\delta C \text{ or }Z_i > 9\delta C) &\leq \Pr(|Z_i-\E[Z_i]|>\delta C)  
\leq \Pr(|Z_i-\E[Z_i]|> \frac 1 8 \E[x])\\
& \leq \exp(-\left \lfloor (\frac 1 8)^2 \frac {\E[Z_i]} 3 \right \rfloor) \leq  \exp(-\Theta(C)) 
\end{align*}
Using that $C \geq n$, the error probability is $\exp(-n)$. We also run an $\ell_0$-estimation algorithm, to estimate
each $Z_i$ up to a multiplicative factor of $1.1$, with total space $\tilde{O}(1)$ and update time $\tilde{O}(1)$ \cite{cdim03}, 
and with failure probability $1/n^{\Omega(1)}$. It follows
from the above that for an $i$ for which $4\delta C \leq \E[Z_i] \leq 8\delta C$, we have that $Z_i \in [3\delta C, 9 \delta C]$ with
probability $1-\exp(-n)$. It follows that our $\ell_0$-estimate for this value of $Z_i$ will be in $[C, 10\delta C]$ with
probability $1-1/n^{\Omega(1)}$. Hence, from $sp_i$, with probability $1-1/n^{\Omega(1)}$ we will recover all values that hash to $sp_i$,
and as argued above these are within statistical distance $e^{-\delta}$ from uniform.
\end{proof}

Combining Lemma \ref{lm:updatetime} with Theorem \ref{thm:mainDS} proves Theorem \ref{thm:mainStreamDens}.

To the best of our knowledge, the fastest known $0.5$-approximation algorithm for the densest subgraph problem has a running time of $O(m+n)$. However, here we need to find the densest subgraph of a sampled graph with at most $C=\tilde{O}(n)$ edges. Thus, the running time of our algorithm is $\tilde{O}(n)$. To the best of our knowledge, this is the fastest $(0.5-\epsilon)$-approximation algorithm for the densest subgraph problem.
Theorem \ref{thm:mainRAMDens} states this fact.

\section{A General Family of Problems}\label{sec:GF}
Here, we extend our results to the heavy subgraph problems. 
Specifically, we show that given an offline $\alpha$-approximation algorithm for a $(\gamma,l)$-heavy subgraph problem, $\mathcal{P}(\gamma,l)$,
Algorithm \ref{alg:GeneralFamily} is an $(\alpha-\epsilon)$-approximation for $\mathcal{P}(\gamma,l)$.
Later, in Section \ref{sec:App}, we show several applications of this algorithm.
In this section, we denote the solution in $Sol_G^k$ that maximizes $f$ by $opt(G,k)$.

\begin{algorithm*}
\textbf{Input:} A graph $G$, a heavy subgraph problem $\mathcal{P}(\gamma, l)$ and an $\alpha$ approximation algorithm $Alg$ for $\mathcal{P}$.

\textbf{Output:} An $\alpha - \epsilon$ estimator of $\mathcal{P}$ on graph $G$, w.pr. $1-m^{-\delta}$.

\begin{algorithmic}[1]
    \STATE Set $C=\frac {12n (4+\delta) \log(l)} { \gamma \epsilon^2}$
    \IF{$|E|\leq C$}
    	\STATE Return $Alg(G)$.
    \ELSE
    	\STATE Sample $C$ edges uniformly at random, without replacement from $G$.
    	\STATE Let $H$ be the sampled graph.
    	\STATE Return $\frac 1 p Alg(H)$.
    \ENDIF
\end{algorithmic}

\caption{A General Algorithm}
\label{alg:GeneralFamily}
\end{algorithm*}

The following lemma is the generalized version of Lemma \ref{lm:kBoundDS}. 

\begin{lemma}\label{lm:kBoundGF}
Let $G$ be the input graph and let $\mathcal{P}(\gamma,l)$ be a heavy subgraph problem and let $Alg$ be an $\alpha$-approximation algorithm for problem $\mathcal{P}$. Let $H=(V,E_H)$ be the sampled graph by Algorithm \ref{alg:GeneralFamily} and let $p=\frac C {m}$.
We have 
\begin{align*}
\Pr\left(  \alpha opt(G,k)-  f_G(Alg(H))) \geq  \epsilon   opt(G)\right) \leq 6 \exp(\log(|Sol_G^k|)-\frac{p \epsilon^2  opt(G)} {12f_k} ),
\end{align*}
where $f_G(Alg(H))$ is the objective value of $G$ on a solution $sol_G$ such that $Alg(H)=sol_G\cap H$.
\end{lemma}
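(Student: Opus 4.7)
My plan is to mirror the proof of Lemma \ref{lm:kBoundDS}, replacing the densest-subgraph specific calculations with the three axioms of a heavy subgraph problem: the hereditary property plays the role ``induced subgraph'' played there, local linearity lets an edge-counting Chernoff argument translate into a statement about $f$, and the $\gamma$-bound (invoked only in the main theorem) later takes care of making the exponent large enough.

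I would fix $k$ and an arbitrary $sol_G\in Sol_G^k$, set $sol_H=sol_G\cap H$, and, by the hereditary property, have $sol_H\in Sol_H^k$. Writing $|E_{sol_H}|=\sum_{e\in E_{sol_G}} x_e$, where $x_e$ indicates $e\in E_H$, Lemma \ref{lm:negative} yields that the $x_e$ are negatively correlated with $\E[|E_{sol_H}|]=p|E_{sol_G}|$. Plugging into Lemma \ref{lm:chernoff} with deviation parameter $\epsilon'=\epsilon\cdot opt(G)/(2f(sol_G))$, and using the local linearity identities $f_H(sol_H)=f_k|E_{sol_H}|$ and $f_G(sol_G)=f_k|E_{sol_G}|$, this yields
\begin{equation*}
\Pr\!\left(\left|f_H(sol_H)/p-f_G(sol_G)\right|\geq \epsilon\, opt(G)/2\right)\leq 3\exp\!\left(-\frac{p\epsilon^2 opt(G)^2}{12 f_k f(sol_G)}\right)\leq 3\exp\!\left(-\frac{p\epsilon^2 opt(G)}{12 f_k}\right),
\end{equation*}
the final bound using $f(sol_G)\leq opt(G)$. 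This is the analog of Inequality \ref{eq:Ubound}.

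Specializing to $sol_G=opt(G,k)$ gives, with failure probability at most $3\exp(-p\epsilon^2 opt(G)/(12 f_k))$, the one-sided bound $opt(H,k)/p\geq opt(G,k)-\epsilon\, opt(G)/2$, the analog of Inequality \ref{eq:kbound1}. A union bound over all $|Sol_G^k|$ solutions yields, with failure probability at most $3|Sol_G^k|\exp(-p\epsilon^2 opt(G)/(12 f_k))$, that every $sol_G\in Sol_G^k$ satisfies $f_G(sol_G)\geq f_H(sol_G\cap H)/p-\epsilon\, opt(G)/2$, the analog of Inequality \ref{eq:kbound2}. Applying this second bound to the $Sol_G^k$-preimage of $Alg(H)$ gives $f_G(Alg(H))\geq f_H(Alg(H))/p-\epsilon\, opt(G)/2$; chaining with the $\alpha$-approximation guarantee $f_H(Alg(H))\geq \alpha\, opt(H)\geq \alpha\, opt(H,k)$ and the first bound then yields
\begin{equation*}
f_G(Alg(H))\geq \alpha\, opt(G,k)-\tfrac{\alpha\epsilon}{2}\, opt(G)-\tfrac{\epsilon}{2}\, opt(G)\geq \alpha\, opt(G,k)-\epsilon\, opt(G),
\end{equation*}
using $\alpha\leq 1$. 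Summing the two failure probabilities gives the claimed bound $6\exp(\log|Sol_G^k|-p\epsilon^2 opt(G)/(12 f_k))$.

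The main step to verify carefully is the Chernoff-exponent simplification: this is where $f_k$ enters and where local linearity is essential, since without the identity $f(sol_G)=f_k|E_{sol_G}|$ the bound $f(sol_G)\leq opt(G)$ would not collapse $opt(G)^2/f(sol_G)$ into a quantity linear in $opt(G)/f_k$. A minor subtlety is that the union-bound step only yields $f_G(Alg(H))\geq f_H(Alg(H))/p-\epsilon\, opt(G)/2$ when $Alg(H)\in Sol_H^k$; this is consistent with the intended use in Theorem \ref{thm:mainGF}, which takes a further union bound over $k\in\{1,\dots,l\}$ and applies the lemma to whichever $k$ attains $opt(G,k)=opt(G)$.
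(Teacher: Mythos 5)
Your proposal is correct and follows essentially the same route as the paper's proof: fix $sol_G\in Sol_G^k$, use negative correlation (Lemma \ref{lm:negative}) with the generalized Chernoff bound and local linearity to get the per-solution concentration bound, specialize to $opt(G,k)$ for one direction, union bound over $Sol_G^k$ and chain with the $\alpha$-approximation guarantee for the other. Your explicit use of $f(sol_G)\leq opt(G)$ to collapse the exponent, and your remark that the union-bound step presumes $Alg(H)\in Sol_H^k$, are both points the paper leaves implicit, but the argument is the same.
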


\begin{proof}
	We prove this lemma in a similar way to that of Lemma \ref{lm:kBoundDS}. Again here we define $x_e$ to be the random variable that indicates whether $e$ exists in $E_H$ or not. 
	However, here we let $sol_G$ be an arbitrary solution from $Sol_G^k$. 
	Let $sol_H$ be a spanning subgraph of $sol_G$ that contains the edges that appear in both $sol_G$ and $H$. The hereditary property says that $sol_H\in Sol_H^k$.
	
	By definition, the number of edges in $sol_H$ is $\sum_{e \in sol_H} 1 = \sum_{e \in sol_G} x_e$. We denote this summation by $X$. Using the local linearity property we have
	\begin{align}\label{eq:expectedGF}
		\E[f(sol_H)] = \E[ f_k {\sum_{e \in sol_G} x_e} ] 
		=  f_k {\sum_{e \in sol_G} \E[x_e]}   
		=  f_k {\sum_{e \in Sol_G} p}  
		= p\cdot f_k {\sum_{e \in Sol_G} 1} 
		= p\cdot f(sol_G).
	\end{align}
	where the first and the last equalities are by Local Linearity, and the third equality is by definition of $p$.
	Again we can use Lemma \ref{lm:negative} to claim that the random variables $x_1,x_2,\dots,x_{m}$ are negatively correlated. Thus, to bound $X_U$, we can apply the following form of the Chernoff bound from Lemma \ref{lm:chernoff}.
	\begin{align*}
		\Pr\left(|X-\E[X]| \geq \epsilon' \E[X] \right) 
		\leq 3 \exp(-\epsilon'^2\E[X]/3)
	\end{align*}
	By setting $\epsilon'=p  \epsilon \frac{ opt(G)} {2f_k \E[X]}$ we have
	\begin{align*}
		\Pr\left(|X-\E[X]| \geq \frac {p \epsilon  opt(G)}{2f_k} \right) 
		&\leq 3 \exp(-\frac{p^2 \epsilon^2  opt(G)^2} {4f_k^2\E[X]^2} \cdot \frac{\E[X]}{3}) \\
		&\leq 3 \exp(-\frac{p^2 \epsilon^2  opt(G)^2} {12f_k^2\E[X]} ) \\
		&\leq 3 \exp(-\frac{p \epsilon^2  opt(G)} {12f_k} ) &\text{Using Equality \ref{eq:expectedGF}} \\
	\end{align*}
	On the other hand we have
	\begin{align*}
		\Pr\left(|X-\E[X]| \geq \frac{p \epsilon  opt(G)}{2f_k} \right) &=
		\Pr\left(\frac 1 p f_k {|X-\E[X]|} \geq   \frac{\epsilon}{2}  opt(G) \right)\\ 
		&= \Pr\left(|\frac 1 p f_k {X} -   {\frac 1 p f_k \E[X]}| \geq   \frac{\epsilon}{2}  opt(G) \right) \\
		&= \Pr\left(|\frac 1 p f(sol_H) -  f(sol_G)| \geq   \frac{\epsilon}{2}  opt(G) \right) &\text{Using Equality \ref{eq:expectedGF}} \\
	\end{align*}
	Therefore, we have
	\begin{align} \label{eq:UboundGF}
		\Pr\left(|\frac 1 p f(sol_H) -  f(sol_G)| \geq   \frac{\epsilon}{2}  opt(G) \right) \leq 3 \exp(-\frac{p \epsilon^2  opt(G)} {12f_k} )
	\end{align}
	If we set $sol_G$ to be the vertex set of $Opt(G,k)$, Inequality \ref{eq:UboundGF} states that 
	\begin{align*}
		\Pr\left(  opt(G,k) - \frac 1 p f_H(opt(G,k))  \geq \frac{\epsilon}{2}   opt(G)\right) \leq 3 \exp(-\frac{p \epsilon^2  opt(G)} {12f_k} )
	\end{align*}
	which immediately gives us 
	\begin{align}\label{eq:kbound1GF}
		\Pr\left(  opt(G,k) - \frac 1 p opt(H,k)  \geq \frac{\epsilon}{2}   opt(G)\right) \leq 3 \exp(-\frac{p \epsilon^2  opt(G)} {12f_k} ).
	\end{align}
	On the other hand, Inequality \ref{eq:UboundGF} states that for each selection of $sol_G$, with probability $1-3 \exp(-\frac{p \epsilon^2  opt(G)} {12f_k} )$, we can upper bound $\frac 1 p f(sol_H)$  by 
	$f(sol_G) + \frac{\epsilon}{2} opt(G)$. 
	Indeed, we have $|Sol_G^k|$ such choices. Thus, by applying a union bound we have
	\begin{align*}
		\Pr\left(\forall_{sol_G\in Sol_G^k} \frac 1 p f(sol_H) -  f(sol_G) \geq   \frac{\epsilon}{2}  opt(G) \right) \leq 3 |Sol_G^k|\exp(-\frac{p \epsilon^2  opt(G)} {12f_k} ).
	\end{align*}
	If we select $sol_G$, using the hereditary property, such that $sol_H=Alg(H)$, we have
	\begin{align*}
		\Pr\left(\frac 1 p Alg(H) - f_G(Alg(H))  \geq \frac{\epsilon}{2}   opt(G)\right) \leq  3 |Sol_G^k|\exp(-\frac{p \epsilon^2  opt(G)} {12f_k} ).
	\end{align*}
	Now, given that $Alg(H)\geq \alpha Opt(H)$, we have
	\begin{align}\label{eq:kbound2GF}
		\Pr\left(\frac 1 p opt(H) - \frac 1 {\alpha} f_G(Alg(H))  \geq \frac 1 {\alpha} \frac{\epsilon}{2}   opt(G)\right) \leq  3 |Sol_G^k|\exp(-\frac{p \epsilon^2  opt(G)} {12f_k} ).
	\end{align}
	
	Therefore, by combining Inequalities \ref{eq:kbound1GF} and \ref{eq:kbound2GF} and applying the union bound we have
	\begin{align*}
		\Pr\left(   opt(G,k)- \frac 1 {\alpha} f_G(Alg(H))) \geq \frac 1 {\alpha} \epsilon   opt(G)\right) &\leq 3 (|Sol_G^k|+1)\exp(-\frac{p \epsilon^2  opt(G)} {12f_k} )\\
		&\leq 6 \exp(\log(|Sol_G^k|)-\frac{p \epsilon^2  opt(G)} {12f_k} )
	\end{align*}
\end{proof}

Now, we are ready to prove Theorem \ref{thm:mainGF}.

\begin{proofof}{Theorem \ref{thm:mainGF}}
Lemma \ref{lm:kBoundGF} together with Equality \ref{eq:expectedGF} imply that for each $k$, with probability at least $1-6 \exp(\log(|Sol_G^k|)-\frac{p \epsilon^2  opt(G)} {12f_k} )$ we have 
$ \alpha opt(G,k)-  \frac 1 p Alg(H)) \geq  \epsilon   opt(G)$. 
By a union bound, this holds for all $1\leq k\leq l$, with probability $1- \sum_{k=1}^{l} 6 \exp(\log(|Sol_G^k|)-\frac{p \epsilon^2  opt(G)} {12f_k} )$.
Thus, for some $k$ with probability $1-6l\cdot \exp(\log(|Sol_G^k|)-\frac{p \epsilon^2  opt(G)} {12f_k} )$ we have $ \alpha opt(G)-  \frac 1 p Alg(H)) \geq  \epsilon   opt(G)$, which means that Algorithm \ref{alg:GeneralFamily} outputs a $(1-\epsilon)$-approximation. 

Now, if we set $p$ to $\frac {12n (4+\delta) \log(l)} {\gamma \epsilon^2  m}$, or equivalently set $C$ in Algorithm \ref{alg:GeneralFamily} to $\frac {12n (4+\delta) \log(l)} { \gamma \epsilon^2}$, we have
\begin{align*}
&1-6l \cdot \exp(\log(|Sol_G^k|)-\frac{p \epsilon^2  opt(G)} {12f_k} )\\
&=1-6l \cdot \exp(\log(|Sol_G^k|)-\frac {12n (4+\delta) \log(l)} {\gamma \epsilon^2  m} \frac{ \epsilon^2  opt(G)} {12f_k} )\\
&\leq 1-6l \cdot \exp(\log(|Sol_G^k|)-\frac {12n (4+\delta) \log(l)} {\gamma \epsilon^2  m} \frac{ \epsilon^2 } {12f_k} \gamma \log(|Sol_G^k|) f_k \frac m n) & \text{From $\gamma$ Bound}\\
&=1-6l \cdot \exp(\log(|Sol_G^k|)- { (4+\delta) \log(l)}  \log(|Sol_G^k|)  ) \\
&<1-  \exp(\log(l)+2+\log(|Sol_G^k|)- { (4+\delta) \log(l)}  \log(|Sol_G^k|)  ) \\
&\leq 1- \exp(- { \delta \log(l)}  \log(|Sol_G^k|)  ) \\
& = 1- e^{-\delta}
\end{align*}
\end{proofof}

\section{Applications}\label{sec:App}
There are several problems that fit into the class of heavy subgraph problems. Some examples are densest bipartite subgraph, directed densest subgraph, $d$-max cut, and $d$-sum-max clustering. In this section we define each of these problems and prove that each satisfies the properties required of a heavy subgraph problem. 

\subsection{Densest Bipartite Subgraph}\label{subsec:DBS}
In the \emph{densest bipartite subgraph} problem, we are given a graph general $G$ and we aim to find a bipartite subgraph $sol$ of $G$ with the maximum density. Let $opt$ be a densest bipartite subgraph of $G$, with parts $A_{opt}$ and $B_{opt}$. Then $opt$ contains all edges of $G$ that are between $A_{opt}$ and $B_{opt}$. We call such a subgraph feasibly maximal and without loss of generality, restrict all of the solutions to be feasibly maximal. In fact, we can indicate a feasibly maximal solution $sol$ by its two parts $A_{sol}$ and $B_{sol}$.

\begin{proofof}{First part of Theorem \ref{thm:Apps}}
For normalization purposes, we increase the value of the objective function by a factor of $n$. Without loss of generality define the density to be $n\cdot \frac{|E_{sol}|}{|V_{sol}|}$.
We set $l=n$, and for any $1\leq k\leq l$, we let $Sol_G^k$ be the set of all solutions $sol$ such that $|A_{sol}|+|B_{sol}|=n-k+1$. Thus, we have $|Sol_G^k|={n \choose {n-k+1}}2^{n-k+1}$. 
\\
\textbf{Local Linearity:}
The density of a solution $sol$ in $Sol_G^k$ is $n\frac{|E_{sol}|}{n-k+1}= |E_{sol}|\frac{n}{n-k+1}$. Thus, we can set $f_k=\frac{n}{n-k+1}$, which is increasing in $k$ and we have $f_1=\frac{n}{n-1+1}=1$, as desired.
\\
\textbf{Hereditary Property:} Let $H$ be a spanning subgraph of $G$. For any solution $sol\in Sol_G^k$, the intersection of $sol$ and $H$ remains bipartite and feasibly maximal, and thus, is a solution in $Sol_H$. Moreover, by definition, the number of vertices of this intersection is the same as $sol$. Thus, it belongs to $Sol_H^k$. On the other hand, for any solution $sol \in Sol_H^k$ with parts $A_{sol}$ and $B_{sol}$, let $sol'$ be the bipartite maximal subgraph of $G$ between the partitions $A_{sol}$ and $B_{sol}$. By definition, $sol'\in Sol_G^k$ and clearly $sol'$ satisfies $sol=sol'\cap H$.
\\
\textbf{$\gamma$ Bound:} In fact, $G$ contains a bipartite subgraph that contains at least $\frac{m}{2}$ edges. The density of this subgraph is at least $n\frac{m/2}{n}=\frac m 2$. On the other hand we have, $f_k=\frac{n}{n-k+1}$ and $|Sol_G^k|={n \choose {n-k+1}}2^{n-k+1}$. Thus, if we set $\gamma$ to $\frac{2}{\log(n)+1}$, we have
\begin{align*}
\gamma \log(|Sol_G^k|) f_k \frac m n & = \frac{2}{\log(n)+1} \log({n \choose {n-k+1}}2^{n-k+1}) \frac{n}{n-k+1} \frac m n \\
&\leq \frac{2}{\log(n)+1} ((n-k+1)\log(n)+(n-k+1)) \frac{n}{n-k+1} \frac m n \\
&\leq \frac{2}{\log(n)+1} (\log(n)+1)   m \\
&= \frac m 2
 \leq opt.
\end{align*}
\end{proofof}
\subsection{Directed Densest Subgraph}
In the \emph{directed densest subgraph} problem we are given a directed graph $G$, we want to find two not necessarily disjoint sets $A,B\subseteq V_G$, to maximize $\frac{|E(A,B)|}{\sqrt{|A|\cdot|B|}}$, where, $E(A,B)$ is the set of all edges $(u,v)\in E_G$, such that $u\in A$ and $v\in B$.

\begin{proofof}{Second part of Theorem \ref{thm:Apps}}
For normalization, we increase the objective function by a factor of $n$ and define it as $n\frac{E(A,B)}{\sqrt{|A|\cdot|B|}}$.
For simplicity, here we index the solution sets using a pair of indices $i$ and $j$.
Here, $Sol_G^{i,j}$ contains any solution $sol=(A,B)$ such that $i=|A|$ and $j=|B|$. Thus, we have $|Sol_G^{i,j}|= {n \choose i} {n \choose j}$, and $l=n^2$.
\\
\textbf{Local Linearity:} In fact, for a solution $sol=(A,B)\in Sol_G^{i,j}$, the directed density of $sol$ is $n\frac{|E(A,B)|}{\sqrt{|A|\cdot|B|}}=n\frac{|E(A,B)|}{\sqrt{i\cdot j}} =|E(A,B)| \frac{n}{\sqrt{i\cdot j}}$. Thus, we can define, $f_{i,j}= \frac{n}{\sqrt{i\cdot j}}$, and we have $min_{i,j}(f_{i,j})=\frac{n}{\sqrt{n\cdot n}}=1$, as desired.
\\
\textbf{Hereditary Property:} For any spanning subgraph $H\subseteq G$, and any solution $sol=(A,B)\in Sol_G^{|A|,|B|}$, the same sets $A$ and $B$ indicate the intersection of $H$ and $sol$, and thus is a solution in $Sol_H^{|A|,|B|}$.
 On the other hand, for any solution $sol \in Sol_H^{|A|,|B|}$ with sets $A$ and $B$, let $sol'$ be the solution on $G$ corresponds to the sets $A$ and $B$. By definition, $sol'\in Sol_G^{|A|,|B|}$ and clearly $sol'$ satisfies $sol=sol'\cap H$.
\\
\textbf{$\gamma$ Bound:} The directed density of the solution $sol=(V_G, V_G)$ is $n\frac{m}{\sqrt{n\cdot n}}=m$. Therefore, the optimum is lower bounded by $m$.
On the other hand, for any $i$ and $j$ we have $|Sol_G^{i,j}|= {n \choose i} {n \choose j}$ and $f_{i,j}= \frac{n}{\sqrt{i\cdot j}}$. If we set $\gamma$ to $\frac{1}{2\sqrt{n}\log(n)}$ we have
\begin{align*}
\gamma \log(|Sol_G^{i,j}|) f_{i,j} \frac m n & = \gamma \log({n \choose i} {n \choose j}) \frac{n}{\sqrt{i\cdot j}} \frac m n\\
& \leq \gamma (i\cdot \log(n) + j\cdot \log(n)) \frac{n}{\sqrt{i\cdot j}} \frac m n\\
& = \gamma \frac{i\cdot \log(n) + j\cdot \log(n)}{\sqrt{i\cdot j}}  m \\
& = \gamma (\frac{\sqrt{i}\cdot \log(n)}{\sqrt{ j}}+ \frac{\sqrt{j}\cdot \log(n)}{\sqrt{i}})  m \\
& \leq \gamma 2\sqrt{n}\log(n)  m \\
& = \frac{1}{2\sqrt{n}\log(n)} 2\sqrt{n}\log(n)  m \\
& =  m 
\leq opt.
\end{align*}
\end{proofof}
\subsection{$d$-Max Cut}
In the \emph{$d$-max cut} problem, we are given a graph $G$ and are supposed to mark the vertices using $d$ labels to maximize the number of edges with different labels. 

\begin{proofof}{Third part of Theorem \ref{thm:Apps}}
Here we simply let all the solutions be in $Sol_G^1$. Indeed, we have $l=1$ and $|Sol_G^1|= d^n$.
\\
\textbf{Local Linearity:} Clearly we have $f_1=f_l=1$.
\\
\textbf{Hereditary Property:} For any spanning subgraph $H\subseteq G$, and any solution $sol\in Sol_G^1$, the same labeling of $sol$ on $H$ gives us the intersection of $sol$ and $H$. Thus, the intersection of $sol$ and $H$ is a solution in $Sol_H=Sol_H^1$. On the other hand, similarly, for any solution $sol \in Sol_H^1$, the same labeling gives us a solution $sol'\in Sol_G^1$ such that $sol=sol'\cap H$.
\\
\textbf{$\gamma$ Bound:} Again here, if we just use $2$ labels we have a solution with $\frac m 2$ value. Thus, we have $opt\geq \frac m 2$. If we set $\gamma$ to $\frac{1}{2\log(d)}$ we have
\begin{align*}
\gamma \log(|Sol_G^k|) f_k \frac m n & = \frac{1}{2\log(d)} \log(d^n)  \frac m n 
= \frac{1}{2\log(d)} \log(d)  m 
= \frac{m}{2}
\leq opt.
\end{align*}
\end{proofof}

\subsection{$d$-Sum-Max Clustering} \label{subsec:DSMC}
This problem is fairly similar to $d$-max cut. Again we are given a graph $G$ and are supposed to mark the vertices using $d$ labels. However, here we have to use all $d$ colors and want to maximize the number of edges with the same labels. 

\begin{proofof}{Fourth part of Theorem \ref{thm:Apps}}
Again, here we simply let all the solutions be in $Sol_G^1$. So we have $l=1$ and $|Sol_G^1|\leq d^n$.
\\
\textbf{Local Linearity:} We have $f_1=1$.
\\
\textbf{Hereditary Property:}  Consider a spanning subgraph $H\subseteq G$, and let $sol\in Sol_G^1$ be an arbitrary solution. Again, we can use the same labeling as $sol$ on $H$ to get the intersection of $sol$ and $H$. Thus, the intersection of $sol$ and $H$ is a solution in $Sol_H^1$. On the other hand, again, for any solution $sol \in Sol_H^1$, the same labeling gives us a solution $sol'\in Sol_G^1$ such that $sol=sol'\cap H$.
\\
\textbf{$\gamma$ Bound:} 
Suppose we choose $d-1$ vertices uniformly at random and label them with labels $1,2,\dots,d-1$ and label all the other vertices with $d$. Then the probability that one of the endpoints of a fixed edge is not labeled by $d$ is at most $2\frac{d-1}{n}$. Thus, the expected number of edges in such a solution is $m- 2m\frac{d-1}{n}= m\frac {n-2(d-1)}{n}\geq m\frac{n-2d}{n}$. Thus, the optimum solution has at least $m\frac{n-2d}{n}$ edges. Now, if we set $\gamma$ to $\frac{n-2d}{n\log(d)}$ we have
\begin{align*}
\gamma \log(|Sol_G^k|) f_k \frac m n & \leq \frac{n-2d}{n\log(d)} \log(d^n)  \frac m n 
= \frac{n-2d}{n\log(d)} \log(d)  m 
= m\frac{n-2d}{n}
\leq opt.
\end{align*}
\end{proofof}

\bibliographystyle{plain}
\bibliography{densest}

\begin{thebibliography}{10}

\bibitem{ahn2012graph}
Kook~Jin Ahn, Sudipto Guha, and Andrew McGregor.
\newblock Graph sketches: sparsification, spanners, and subgraphs.
\newblock In {\em Proceedings of the 31st symposium on Principles of Database
  Systems}, pages 5--14. ACM, 2012.

\bibitem{bahmani2012densest}
Bahman Bahmani, Ravi Kumar, and Sergei Vassilvitskii.
\newblock Densest subgraph in streaming and mapreduce.
\newblock {\em Proceedings of the VLDB Endowment}, 5(5):454--465, 2012.

\bibitem{bhattacharya2015space}
Sayan Bhattacharya, Monika Henzinger, Danupon Nanongkai, and Charalampos~E
  Tsourakakis.
\newblock Space-and time-efficient algorithm for maintaining dense subgraphs on
  one-pass dynamic streams.
\newblock In {\em STOC}, 2015.

\bibitem{charikar2000greedy}
Moses Charikar.
\newblock Greedy approximation algorithms for finding dense components in a
  graph.
\newblock In {\em Approximation Algorithms for Combinatorial Optimization},
  pages 84--95. Springer, 2000.

\bibitem{cdim03}
Graham Cormode, Mayur Datar, Piotr Indyk, and S.~Muthukrishnan.
\newblock Comparing data streams using hamming norms (how to zero in).
\newblock {\em {IEEE} Trans. Knowl. Data Eng.}, 15(3):529--540, 2003.

\bibitem{dourisboure2007extraction}
Yon Dourisboure, Filippo Geraci, and Marco Pellegrini.
\newblock Extraction and classification of dense communities in the web.
\newblock In {\em Proceedings of the 16th international conference on World
  Wide Web}, pages 461--470. ACM, 2007.

\bibitem{esfandiari2014online}
Hossein Esfandiari, MohammadTaghi HajiAghayi, Mohammad~Reza Khani, Vahid
  Liaghat, Hamid Mahini, and Harald R{\"a}cke.
\newblock Online stochastic reordering buffer scheduling.
\newblock In {\em Automata, Languages, and Programming}, pages 465--476.
  Springer, 2014.

\bibitem{feigenblat2011exponential}
Guy Feigenblat, Ely Porat, and Ariel Shiftan.
\newblock Exponential time improvement for min-wise based algorithms.
\newblock In {\em Proceedings of the twenty-second annual ACM-SIAM symposium on
  Discrete Algorithms}, pages 57--66. SIAM, 2011.

\bibitem{gibson2005discovering}
David Gibson, Ravi Kumar, and Andrew Tomkins.
\newblock Discovering large dense subgraphs in massive graphs.
\newblock In {\em Proceedings of the 31st international conference on Very
  large data bases}, pages 721--732. VLDB Endowment, 2005.

\bibitem{glps10}
Anna~C. Gilbert, Yi~Li, Ely Porat, and Martin~J. Strauss.
\newblock Approximate sparse recovery: optimizing time and measurements.
\newblock In {\em Proceedings of the 42nd {ACM} Symposium on Theory of
  Computing, {STOC} 2010, Cambridge, Massachusetts, USA, 5-8 June 2010}, pages
  475--484, 2010.

\bibitem{jst11}
Hossein Jowhari, Mert Saglam, and G{\'{a}}bor Tardos.
\newblock Tight bounds for lp samplers, finding duplicates in streams, and
  related problems.
\newblock In {\em Proceedings of the 30th {ACM} {SIGMOD-SIGACT-SIGART}
  Symposium on Principles of Database Systems, {PODS} 2011, June 12-16, 2011,
  Athens, Greece}, pages 49--58, 2011.

\bibitem{knpw11}
Daniel~M. Kane, Jelani Nelson, Ely Porat, and David~P. Woodruff.
\newblock Fast moment estimation in data streams in optimal space.
\newblock In {\em Proceedings of the 43rd {ACM} Symposium on Theory of
  Computing, {STOC} 2011, San Jose, CA, USA, 6-8 June 2011}, pages 745--754,
  2011.

\bibitem{kapralov2015streaming}
Michael Kapralov, Sanjeev Khanna, and Madhu Sudan.
\newblock Streaming lower bounds for approximating max-cut.
\newblock In {\em Proceedings of the Twenty-Sixth Annual ACM-SIAM Symposium on
  Discrete Algorithms}, pages 1263--1282. SIAM, 2015.

\bibitem{MFCSMcGregor}
Andrew McGregor, David Tench, Sofya Vorotnikova, and Hoa~T. Vu.
\newblock Densest subgraph in dynamic graph streams.
\newblock In {\em Mathematical Foundations of Computer Science 2015}. Springer,
  2015.

\bibitem{panconesi1997randomized}
Alessandro Panconesi and Aravind Srinivasan.
\newblock Randomized distributed edge coloring via an extension of the
  chernoff--hoeffding bounds.
\newblock {\em SIAM Journal on Computing}, 26(2):350--368, 1997.

\bibitem{schmidt1995chernoff}
Jeanette~P Schmidt, Alan Siegel, and Aravind Srinivasan.
\newblock Chernoff-hoeffding bounds for applications with limited independence.
\newblock {\em SIAM Journal on Discrete Mathematics}, 8(2):223--250, 1995.

\end{thebibliography}



\end{document}